\definecolor{shadecolor}{rgb}{0.9,0.9,0.9}
\newcommand{\mathsym}[1]{{}}
\newtheorem{fact}[theorem]{Fact}
\newcommand{\cancel}[1]{}
\newcommand{\N}{\ensuremath{\mathbb{N}}}
\newcommand{\E}{\ensuremath{\mathbb{E}}}
\renewcommand{\Pr}{\ensuremath{\mathbb{P}}}
\begin{document}

\title{A Jamming-Resistant MAC Protocol for\\Multi-Hop Wireless
Networks\thanks{A preliminary version of this article appeared at
the 24th International Symposium on Distributed Computing (DISC),
2010.}}


\author {
   Andrea Richa$^1$, Christian Scheideler$^2$, Stefan Schmid$^3$, Jin Zhang$^1$\\
   \small $^1$ Computer Science and Engineering, SCIDSE,  Arizona State University\\
       \small Tempe, AZ 85287, USA; \{aricha,jzhang82\}@asu.edu\\
  \small $^2$ Department of Computer Science, University of
Paderborn, D-33102 Paderborn,
  Germany,; \small scheideler@upb.de\\
 \small $^3$ Deutsche Telekom Laboratories, TU Berlin, D-10587 Berlin,
  Germany\\ \small stefan@net.t-labs.tu-berlin.de\
}

\institute{ }

\date{}

\maketitle
%
%
%

\sloppy

\begin{abstract}
This paper presents a simple local medium access control protocol, called
\textsc{Jade}, for multi-hop wireless networks with a single channel that is
provably robust against adaptive adversarial jamming. The wireless network is
modeled as a unit disk graph on a set of nodes distributed arbitrarily in the
plane. In addition to these nodes, there are adversarial jammers that know the
protocol and its entire history and that are allowed to jam the wireless
channel at any node for an arbitrary $(1-\epsilon)$-fraction of the time
steps, where $0<\epsilon<1$ is an arbitrary constant. We assume that the nodes
cannot distinguish between jammed transmissions and collisions of regular
messages. Nevertheless, we show that \textsc{Jade} achieves an asymptotically
optimal throughput if there is a sufficiently dense distribution of nodes.
\end{abstract}


\section{Introduction}\label{sec:introduction}

The problem of coordinating the access to a shared medium is a central
challenge in wireless networks. In order to solve this problem, a proper
medium access control (MAC) protocol is needed. Ideally, such a protocol
should not only be able to use the wireless medium as effectively as possible,
but it should also be robust against attacks. Unfortunately, most of the MAC
protocols today can be easily attacked. A particularly critical class of
attacks are {\em jamming attacks} (i.e., denial-of-service attacks on the
broadcast medium). Jamming attacks are typically easy to implement as the
attacker does not need any special hardware. Attacks of this kind usually aim
at the physical layer and are realized by means of a high transmission power
signal that corrupts a communication link or an area, but they may also occur
at the MAC layer, where an adversary may either corrupt control packets or
reserve the channel for the maximum allowable number of slots so that other
nodes experience low throughput by not being able to access the channel. In
this paper we focus on jamming attacks at the physical layer, that is, the
interference caused by the jammer will not allow the nodes to receive
messages. The fundamental question that we are investigating is: {\em Is there
a MAC protocol such that for any physical-layer jamming strategy, the protocol
will still be able to achieve an asymptotically optimal throughput for the
non-jammed time steps?} Such a protocol would {\em force} the jammer to jam
all the time in order to prevent any successful message transmissions. Finding
such a MAC protocol is not a trivial problem. In fact, the widely used IEEE
802.11 MAC protocol already fails to deliver any messages for very simple
oblivious jammers that jam only a small fraction of the time
steps~\cite{Raj08}. On the positive side, Awerbuch et al.~\cite{singlehop08}
have demonstrated that there are MAC protocols which are provably robust
against even massive adaptive jamming, but their results only hold for
single-hop wireless networks with a single jammer, that is, all nodes
experience the same jamming sequence.

In this paper, we significantly extend the results in~\cite{singlehop08}. We
present a MAC protocol called $\textsc{Jade}$ (a short form of ``jamming
defense'') that can achieve a constant fraction of the best possible
throughput for a large class of jamming strategies in a large class of
multi-hop networks where transmissions and interference can be modeled using
unit-disk graphs. These jamming strategies include jamming patterns that can
be completely different from node to node. It turns out that while
$\textsc{Jade}$ differs only slightly from the MAC protocol
of~\cite{singlehop08}, the proof techniques needed for the multi-hop setting
significantly differ from the techniques in~\cite{singlehop08}.

\subsection{Model} \label{sec:model}

We consider the problem of designing a robust MAC protocol for multi-hop
wireless networks with a single wireless channel. The wireless network is
modeled as a \emph{unit disk graph} (UDG) $G=(V,E)$ where $V$ represents a set
of $n=|V|$ honest and reliable nodes and two nodes $u,v\in V$ are within each
other's transmission range, i.e., $\{u,v\}\in E$, if and only if their
(normalized) distance is at most 1. We assume that time proceeds in
synchronous time steps called {\em rounds}. In each round, a node may either
transmit a message or sense the channel, but it cannot do both. A node which
is sensing the channel may either $(i)$ sense an {\em idle} channel (if no
other node in its transmission range is transmitting at that round and its
channel is not jammed), $(ii)$ sense a {\em busy} channel (if two or more
nodes in its transmission range transmit at that round or its channel is
jammed), or $(iii)$ {\em receive} a packet (if exactly one node in its
transmission range transmits at that round and its channel is not jammed).

In addition to these nodes there is an adversary (who may control any number
of jamming devices). We allow the adversary to know the protocol and its
entire history and to use this knowledge in order to jam the wireless channel
at will at any round (i.e, the adversary is {\em adaptive}). However, like in
\cite{singlehop08}, the adversary has to make a jamming decision \emph{before}
it knows the actions of the nodes at the current round. The adversary can jam
the nodes individually at will, as long as for every node $v$, at most a
$(1-\epsilon)$-fraction of its rounds is jammed, where $\epsilon>0$ can be an
arbitrarily small constant. That is, $v$ has the chance to receive a message
in at least an $\epsilon$-fraction of the rounds. More formally, an adversary
is called {\em $(T,1-\epsilon)$-bounded} for some $T \in \N$ and
$0<\epsilon<1$, if for any time window of size $w \ge T$ and at any node $v$,
the adversary can jam at most $(1-\epsilon)w$ of the $w$ rounds at $v$.

Given a node $v$ and a time interval $I$, we define $f_v(I)$ as the number of
time steps in $I$ that are non-jammed at $v$ and $s_v(I)$ as the number of
time steps in $I$ in which $v$ successfully receives a message. A MAC protocol
is called {\em $c$-competitive} against some $(T,1-\epsilon)$-bounded
adversary if, for any time interval $I$ with $|I| \ge K$ for a sufficiently
large $K$ (that may depend on $T$ and $n$),
$
  \sum_{v \in V} s_v(I) \ge c \cdot \sum_{v \in V} f_v(I).
$
In other words, a $c$-competitive MAC protocol can achieve at least a
$c$-fraction of the best possible throughput.

Our goal is to design a {\em symmetric local-control} MAC protocol (i.e.,
there is no central authority controlling the nodes, and all the nodes are
executing the same protocol) that has a constant-competitive throughput
against any $(T,1-\epsilon)$-bounded adversary in any multi-hop network that
can be modeled as a UDG. In order to obtain a more refined picture of the
competitiveness of our protocol, we will also investigate so-called
$k$-uniform adversaries. An adversary is {\em $k$-uniform} if the node set $V$
can be partitioned into $k$ subsets so that the jamming sequence is the same
within each subset. In other words, we require that at all times, the nodes in
a subset are either all jammed or all non-jammed. Thus, a 1-uniform jammer
jams either everybody or nobody in a round whereas an $n$-uniform jammer can
jam the nodes individually at will.


In this paper, we will say that a claim holds \emph{with high probability
(w.h.p.)} iff it holds with probability at least $1-1/n^c$ for any constant $c
\ge 1$; it holds \emph{with moderate probability (w.m.p.)} iff it holds with
probability at least $1-1/(\log n)^c$ for any constant $c \ge 1$.

\subsection{Related Work}

Due to the topic's importance, wireless network jamming has been extensively
studied in the applied research fields
\cite{Alnifie:Q2SWinet07,Brown:mobihoc06,Chiang:mobicom07,citer1,Law:sasn05,Li:infocom07,LiuNST07,NavdaBGR07,NegiP03,ThuenteA06,Wood:secon07,Xu:network06,Xu:mobihoc05},
both from the attacker's perspective
\cite{Chiang:mobicom07,Law:sasn05,Li:infocom07,Xu:mobihoc05} as well as from
the defender's perspective
\cite{Alnifie:Q2SWinet07,Brown:mobihoc06,Chiang:mobicom07,Li:infocom07,LiuNST07,NavdaBGR07,Wood:secon07,Xu:mobihoc05}---also
in multi-hop settings (e.g.~\cite{multi5,multi4,multi3,multi2,multi1}).

Traditionally, jamming defense mechanisms operate on the physical layer
\cite{LiuNST07,NavdaBGR07,SimonOSL01}. Mechanisms have been designed to {\em
avoid} jamming as well as {\em detect} jamming. Spread spectrum technology has
been shown to be very effective to avoid jamming as with widely spread
signals, it becomes harder to detect the start of a packet quickly enough in
order to jam it. Unfortunately, protocols such as IEEE 802.11b use relatively
narrow spreading~\cite{IEEE99}, and some other IEEE 802.11 variants spread
signals by even smaller factors~\cite{Brown:mobihoc06}. Therefore, a jammer
that simultaneously blocks a small number of frequencies renders spread
spectrum techniques useless in this case. As jamming strategies can come in
many different flavors, detecting jamming activities by simple methods based
on signal strength, carrier sensing, or packet delivery ratios has turned out
to be quite difficult \cite{Li:infocom07}.

Recent work has also studied \emph{MAC layer strategies} against jamming,
including coding strategies \cite{Chiang:mobicom07}, channel surfing and
spatial retreat \cite{Alnifie:Q2SWinet07,Xu:wws04}, or mechanisms to hide
messages from a jammer, evade its search, and reduce the impact of corrupted
messages \cite{Wood:secon07}. Unfortunately, these methods do not help against
an adaptive jammer with {\em full} information about the history of the
protocol, like the one considered in our work.

In the theory community, work on MAC protocols has mostly focused on
efficiency. Many of these protocols are random backoff or tournament-based
protocols \cite{Bender05,Chlebus06,Gold00,Hastad96,Kwak05,Raghavan99} that do
not take jamming activity into account and, in fact, are not robust against it
(see \cite{singlehop08} for more details). The same also holds for many MAC
protocols that have been designed in the context of broadcasting \cite{CR06}
and clustering \cite{Kuhn04}.
Also some work on jamming is known (e.g.,~\cite{citer2} for a short overview).
There are two basic approaches in the literature. The first assumes randomly
corrupted messages (e.g.~\cite{PP05}), which is much easier to handle than
adaptive adversarial jamming \cite{Raj08}. The second line of work either
bounds the number of messages that the adversary can transmit or disrupt with
a limited energy budget (e.g.~\cite{GGN06,KBKV06}) or bounds the number of
channels the adversary can jam
(e.g.~\cite{dolevpodc,DGGN07,DGGN08,shlomi07,GGKN09,seth09,dcoss09}).

The protocols in \cite{GGN06,KBKV06} can tackle adversarial jamming at both
the MAC and network layers, where the adversary may not only be jamming the
channel but also introducing malicious (fake) messages (possibly with address
spoofing). However, they depend on the fact that the adversarial jamming
budget is finite, so it is not clear whether the protocols would work under
heavy continuous jamming. (The result in \cite{GGN06} seems to imply that a
jamming rate of $1/2$ is the limit whereas the handshaking mechanisms in
\cite{KBKV06} seem to require an even lower jamming rate.)

In the multi-channel version of the problem introduced in the theory community
by Dolev \cite{shlomi07} and also studied in
\cite{dolevpodc,DGGN07,DGGN08,shlomi07,GGKN09,seth09,dcoss09}, a node can only
access one channel at a time, which results in protocols with a fairly large
runtime (which can be exponential for deterministic protocols
\cite{DGGN07,GGKN09} and at least quadratic in the number of jammed channels
for randomized protocols \cite{DGGN08,dcoss09} if the adversary can jam almost
all channels at a time). Recent work~\cite{dolevpodc} also focuses on the
wireless synchronization problem which requires devices to be activated at
different times on a congested single-hop radio network to synchronize their
round numbering while an adversary can disrupt a certain number of frequencies
per round. Gilbert et al.~\cite{seth09} study robust information exchange in
single-hop networks.

Our work is motivated by the work in \cite{Raj08} and \cite{singlehop08}. In
\cite{Raj08} it is shown that an adaptive jammer can dramatically reduce the
throughput of the standard MAC protocol used in IEEE 802.11 with only limited
energy cost on the adversary side. Awerbuch et al. \cite{singlehop08}
initiated the study of throughput-competitive MAC protocols under continuously
running, adaptive jammers, but they only consider single-hop wireless
networks. We go one step further by considering \emph{multi-hop networks}
where different nodes can have different channel states at a time, e.g., a
transmission may be received only by a fraction of the nodes. It turns out
that while the MAC protocol of~\cite{singlehop08} can be adopted to the
multi-hop setting with a small modification, the proof techniques cannot. We
are not aware of any other theoretical work on MAC protocols for multi-hop
networks with provable performance against adaptive jamming.

\subsection{Our Contributions}

In this paper, we present a robust MAC protocol called {\sc Jade}. {\sc Jade}
is a fairly simple protocol: it is based on a very small set of assumptions
and rules and has a minimal storage overhead. In fact, in {\sc Jade} every
node just stores a constant number of parameters, among them a fixed parameter
$\gamma$ that should be chosen so that the following main theorem holds:

\begin{theorem}\label{th:main1}
When running {\sc Jade} for $\Omega((T \log n)/\epsilon + (\log n)^4/(\gamma
\epsilon)^2)$ time steps, \textsc{Jade} has a constant competitive throughput
for any $(T,1-\epsilon)$-bounded adversary and any UDG w.h.p. as long as
$\gamma = O(1/(\log T+ \log \log n))$ and (a) the adversary is 1-uniform and
the UDG is connected, or (b) there are at least $2/\epsilon$ nodes within the
transmission range of every node.
\end{theorem}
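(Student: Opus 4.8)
The plan is to analyze \textsc{Jade} by tracking, for each node, a transmission probability $p_v$ (or equivalently the parameter it adapts in response to channel observations) and showing that these probabilities converge to and stay within a ``good'' range where the cumulative sending probability in each neighborhood is $\Theta(1)$. The core intuition from the single-hop case \cite{singlehop08} is a self-regulating feedback loop: when a node senses idle channels too often it increases $p_v$, and when it senses collisions (or jamming, which it cannot distinguish) too often it decreases $p_v$; the threshold parameter $\gamma$ governs how aggressively this adjustment reacts. First I would establish the basic invariant that no node's sending probability ever gets too large (an upper bound like $p_v \le \hat p$ for a suitable constant, maintained deterministically by the protocol rules), so that in any round the expected number of senders in a neighborhood is bounded; this gives a constant lower bound on the probability that a given non-jammed round is ``wasted'' being converted into a successful reception only if the total neighborhood probability is not too small either.

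Next I would set up a potential-function argument over a sufficiently long time window (of length $\Omega((T\log n)/\epsilon + (\log n)^4/(\gamma\epsilon)^2)$, matching the theorem). The potential should measure how far the configuration is from the target regime — something like $\sum_v (\log(1/p_v))^2$ or a sum of per-node ``debt'' terms — and the goal is to show it has a drift toward the good region whenever throughput is low. The $(T,1-\epsilon)$-boundedness enters here: in any window of length $\ge T$ every node sees at least an $\epsilon$-fraction of non-jammed rounds, so the adversary cannot indefinitely suppress the feedback signal; the $T\log n/\epsilon$ term in the runtime is what is needed for the potential to drop w.h.p. across all $n$ nodes via a union bound, and the $(\log n)^4/(\gamma\epsilon)^2$ term accounts for the granularity of the multiplicative adjustments and the variance in the observations over a window. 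I would prove that once the potential is small, it stays small (a ``stability'' lemma), and that while the potential is small the throughput $\sum_v s_v(I)$ is a constant fraction of $\sum_v f_v(I)$, because each neighborhood then has $\Theta(1)$ total sending probability and hence each non-jammed round yields a successful reception with constant probability.

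The key multi-hop difficulty — and where the proof genuinely departs from \cite{singlehop08} — is that a node's channel observation is a function of \emph{all} its neighbors' actions, and neighborhoods overlap in complicated ways in a UDG; a single node can be ``dragged down'' by a dense cluster of neighbors or conversely starved because its own neighborhood is sparse even though the graph is connected. The two hypotheses (a) and (b) are precisely the two ways to rule out the pathological cases: hypothesis (b), having $\ge 2/\epsilon$ nodes per neighborhood, guarantees enough local redundancy that \emph{some} neighbor succeeds often enough to keep the feedback honest, regardless of the jamming pattern; hypothesis (a), 1-uniformity plus connectedness, means the jamming signal is global so local sparsity cannot be exploited adversarially and information effectively propagates along the connected graph. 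I expect the main obstacle to be the careful bookkeeping that couples overlapping neighborhoods — showing that the good region is simultaneously reachable and stable for all nodes at once — which likely requires a delicate argument distinguishing ``high'' nodes (those whose local probability mass is already large) from ``low'' ones and bounding how the high nodes can interfere with the recovery of the low ones, handled separately under the structural assumption of case (a) versus the density assumption of case (b). I would isolate that coupling into one or two technical lemmas and treat the rest as adaptations of the single-hop drift analysis.
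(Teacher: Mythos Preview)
Your proposal identifies the right high-level challenge (overlapping neighborhoods in the UDG) but misdiagnoses both how the paper resolves it and what hypotheses (a) and (b) are actually for.

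The paper does \emph{not} use a global potential like $\sum_v (\log(1/p_v))^2$ with a drift argument. The coupling between overlapping neighborhoods is handled purely geometrically: each unit disk $D(u)$ is partitioned into six sectors, and the key observation is that nodes within a single sector are all within distance~$1$ of each other, so each sector is a clique and the per-sector analysis is essentially single-hop. The central technical lemma (Lemma~\ref{lem:contention}) shows that the cumulative probability $p_S=\sum_{v\in S}p_v$ in each sector is $O(1)$ for all but an $\epsilon\beta$-fraction of rounds; its proof is a three-threshold (green/yellow/red) martingale argument on $\log_{1+\gamma} p_S$ via Azuma's inequality, together with a ``recovery'' lemma showing that if $p_S$ stays high for a whole subframe then every individual $p_v$ collapses to $1/n^2$. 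A further geometric fact (a disk of radius~$2$ decomposes into at most $20$ regions of diameter~$\le 1$) lifts this to bound $\sum_{w\in D(v)} p_w$ for all $v\in D(u)$ simultaneously. Crucially, this contention bound holds for \emph{arbitrary} $(T,1-\epsilon)$-bounded adversaries, with no appeal to (a) or~(b).

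Hypotheses (a) and (b) enter for a different reason than you propose. Once contention is bounded, the remaining obstacle is that a node's non-jammed round is useless unless some neighbor is \emph{also} non-jammed in that round; the paper calls such rounds \emph{open} and first proves the theorem for a \emph{weakly} $(T,1-\epsilon)$-bounded adversary, one under which a constant fraction of each node's non-jammed rounds are open. Case~(a) is then immediate: under a $1$-uniform adversary on a connected UDG, every non-jammed round is open. Case~(b) is a short counting/pigeonhole argument over a constant-density dominating set: if $|D(u)|\ge 2/\epsilon$, at most $(1-\epsilon/2)|I|$ rounds can have $\le 1$ non-jammed node in $D(u)$, so a constant fraction of the total non-jammed mass is open. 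Your reading that (a)/(b) are what tame the neighborhood coupling is the main gap; without the sector decomposition, your global potential would have to absorb interference from arbitrarily many overlapping disks, and you give no mechanism for that. Likewise, the ``high vs.\ low node'' dichotomy you anticipate is not what the paper does; instead, for each node it splits on whether that node \emph{received} enough messages (done) or not, in which case a counting argument against Lemma~\ref{lem:claim2.5} shows $p_u=\hat p$ in $\Theta(|J|)$ good open rounds, so $u$ \emph{transmits} successfully often, and a sender/receiver charging argument finishes.
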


Note that in practice, $\log T$ and $\log \log n$ are rather small so that our
condition on $\gamma$ is not too restrictive. Also, a conservative estimate on
$\log T$ and $\log \log n$ would leave room for a superpolynomial change in
$n$ and a polynomial change in $T$ over time.

On the other hand, we can also show the following result demonstrating that
Theorem~\ref{th:main1} essentially captures all the scenarios (within our
notation) under which {\sc Jade} can have a constant competitive throughput.

\begin{theorem} \label{th:main2}
If (a) the UDG is not connected, or (b) the adversary is allowed to
be 2-uniform and there are nodes with $o(1/\epsilon)$ nodes within
their transmission range, then there are cases in which {\sc Jade}
is not constant competitive for any constant $c$ independent of
$\epsilon$.
\end{theorem}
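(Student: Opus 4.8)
The plan is, for each part, to construct an explicit unit disk graph $G$ and a matching adversary (adaptive and $n$-uniform for (a), $2$-uniform for (b)) and to show that for every sufficiently long interval $I$ one has $\sum_{v\in V} s_v(I) = o(1)\cdot \sum_{v\in V} f_v(I)$, with the $o(1)$ tending to $0$ as $\epsilon\to 0$ or $n\to\infty$; this rules out any competitive ratio $c$ that is independent of $\epsilon$. The analysis will use that the only per-node state \textsc{Jade} keeps is essentially a sending probability $p_v$ (plus a counter and an adaptive threshold $T_v$), that $p_v$ moves only in multiplicative $(1\pm\gamma)$ steps (up on sensing an idle channel, down on sensing a busy channel while not transmitting), and that it is pulled back up only when \textsc{Jade}'s jamming-detection threshold fires. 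So, once we understand how $p_v$ evolves under a long run versus a short run of ``busy'' senses, it suffices to design the jamming so that the $p_v$'s that matter stay outside the narrow ``useful'' range for all but an $o(1)$ fraction of the rounds that count.

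For (a) the point is that disconnectedness lets almost all of the optimum $\sum_v f_v(I)$ sit on nodes that can never receive. Take $G$ to be one edge $\{a,b\}$ together with $n-2$ isolated vertices placed far apart; this is a valid UDG and is disconnected for $n\ge 3$. Let the adversary jam every node a $(1-\epsilon)$-fraction of the time. Every isolated vertex $v$ has $s_v(I)=0$ but $f_v(I)\ge\epsilon|I|$, while $s_a(I)+s_b(I)\le 2|I|$; hence $\sum_v s_v(I)\le 2|I|\le \frac{2}{(n-2)\epsilon}\sum_v f_v(I)$, so the competitive ratio is at most $\frac{2}{(n-2)\epsilon}$, which falls below any fixed $c$ as soon as $n>2+2/(c\epsilon)$ (and with $E=\varnothing$ it is exactly $0$). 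This is precisely why Theorem~\ref{th:main1} assumes connectivity. If one insists on a witness whose components are all non-trivial, one can instead include a sparse connected component (e.g.\ a path) and attack it with a non-uniform jammer; the reason it then fails is the mechanism of part (b).

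For (b) we want a connected instance. Let $G=K_d$ with $d=d(\epsilon)\to\infty$ but $d=o(1/\epsilon)$ (say $d=\lceil\epsilon^{-1/2}\rceil$), so every node has $o(1/\epsilon)$ neighbours. Split $V=V_1\cup V_2$ in half and let the $2$-uniform adversary repeat a length-$T$ window that jams only $V_2$ during its first $\epsilon T$ rounds, everybody during the middle $(1-2\epsilon)T$ rounds, and only $V_1$ during the last $\epsilon T$ rounds; then each node is jammed a $(1-\epsilon)$-fraction of the time and (for $\epsilon\le 1/2$) no node is ever un-jammed simultaneously with a neighbour. No one receives during the all-jammed middle stretches, so throughput is collected only in the one-side-free sub-windows, and there a free node $x$ receives only when exactly one of its $d-1$ neighbours transmits, which — with those neighbours at a roughly common probability $q$ — has probability $\approx (d-1)q(1-q)^{d-2}=o(1)$ unless $q$ lies within a constant factor of $1/d$. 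The heart of the argument is that the schedule keeps $q$ out of that band: a neighbour $u\in V_2$ entering a $V_1$-free sub-window has just come off its own free period, during which it saw apparent collisions (the $V_1$-side sat near the cap $\hat p=\Theta(1)$, so $\ge 2$ of $u$'s neighbours transmitted in almost every round) and was pushed downward by \textsc{Jade}, and during the $\epsilon T\ll T$ rounds of the sub-window $u$'s threshold does not fire, so $q$ stays tiny and the channel looks idle to $x$; symmetrically, before a $V_2$-free sub-window the $V_1$ nodes have been jammed for $\approx T$ rounds, long enough for \textsc{Jade}'s threshold to reset their probability up toward $\hat p$, so the channel looks like a permanent collision. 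Either way every node receives in only an $o(1)$ fraction of its free rounds, giving $\sum_v s_v(I)=o(1)\cdot\sum_v f_v(I)$ and hence non-$c$-competitiveness for small $\epsilon$.

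The main obstacle is this last step of (b): turning ``the channel looks idle / like a collision'' into a rigorous statement about \textsc{Jade}'s actual update rule, counter, and adaptive threshold. One has to show a neighbour's probability genuinely cannot be within a constant factor of $1/d$ during the relevant sub-windows — in particular bounding the transient of order $(\log d)/\gamma$ rounds in which a multiplicatively recovering probability can sweep through the useful band, which is negligible against $\epsilon T$ once $T$ is chosen large — and control where the threshold estimate $T_v$ settles (if $T_v$ drifts low the nodes are starved by collisions rather than by idleness, but this too needs checking); one must also confirm that $T$, the bipartition, $d=d(\epsilon)$, and the sub-window lengths can be chosen to meet the $(T,1-\epsilon)$-boundedness constraint and the ``never simultaneously un-jammed'' property at once. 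Part (a), by contrast, needs no protocol analysis once the isolated-vertex observation is in hand.
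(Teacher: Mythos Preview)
Your part (a) is correct and is essentially the paper's one-line observation: isolated vertices contribute non-jammed rounds to $\sum_v f_v$ but can never receive, so no protocol is constant competitive on a disconnected UDG.

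Your part (b), however, rests on a misreading of \textsc{Jade}. In the actual protocol $p_v$ increases only when $v$ senses an idle channel, decreases on a successful reception, and decreases again whenever the $T_v$-threshold fires after $T_v$ consecutive blocked rounds; a plain busy sense leaves $p_v$ unchanged. Your summary (``down on sensing a busy channel'', ``pulled back up only when the threshold fires'', and ``jammed for $\approx T$ rounds, long enough for \textsc{Jade}'s threshold to reset their probability up toward $\hat p$'') inverts these directions, so the oscillation you sketch --- one side pinned near $\hat p$, the other driven low --- cannot be derived from the real update rules. There is also a geometric slip: in $K_d$ the $d/2$ nodes of $V_1$ are all mutual neighbours and are simultaneously un-jammed during the $V_1$-free phase, so ``no node is ever un-jammed simultaneously with a neighbour'' is simply false. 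Worse, because the free half forms a single-hop clique with $\epsilon T$ un-jammed rounds in which to talk among themselves, the single-hop analysis suggests they achieve $\Theta(\epsilon T)$ receptions each, i.e.\ $\Theta(d\,\epsilon T)=\Theta\bigl(\sum_v f_v\bigr)$ in total --- so your symmetric instance is plausibly constant competitive rather than a counterexample.

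The paper's instance is asymmetric, and that asymmetry is the whole point. Take one node $v$ together with a cluster $U$, $|U|=o(1/\epsilon)$, all within range of $v$; the $2$-uniform adversary \emph{never} jams $v$ but jams $U$ in all but the first $\epsilon T$ rounds of each $T$-window. Now $v$ alone contributes $T$ non-jammed rounds per window and dominates $\sum_w f_w$, while during the $(1-\epsilon)T$ jammed rounds the $U$-nodes see only blocked channels and, by the threshold rule together with the single-hop bound $T_u\le\gamma T/(\beta\log n)$, their $p_u$ decay, leaving $v$ almost nothing to receive. One counts at most $\epsilon T\cdot|U|+\bigl(\epsilon T+O(T/\log n)\bigr)=(\epsilon+o(1))T$ receptions against more than $T$ non-jammed rounds, hence ratio at most $\epsilon+o(1)$; the hypothesis $|U|=o(1/\epsilon)$ is exactly what makes $\epsilon T\cdot|U|=o(T)$.
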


Certainly, no MAC protocol can guarantee a constant competitive throughput if
the UDG is not connected. However, it is still open whether there are simple
MAC protocols that are constant competitive under non-uniform jamming
strategies even if there are $o(1/\epsilon)$ nodes within the transmission
range of a node.

%

\section{Description of {\sc Jade}}\label{sec:algo}

This section first gives a short motivation for our algorithmic approach and
then presents the {\sc Jade} protocol in detail.

\subsection{Intuition}

The intuition behind our MAC protocol is simple: in each round,
each node $u$ tries to send a message with probability $p_u$ with
$p_u \le \hat{p}$ for some small constant $0<\hat{p}<1$. Consider
the unit disk $D(u)$ around node $u$ consisting of $u$'s
neighboring nodes as well as $u$.\footnote{In this paper, disks
(and later sectors) will refer both to 2-dimensional areas in the
plane as well as to the set of nodes in the respective areas. The
exact meaning will become clear in the specific context.}
Moreover, let $N(u)=D(u)\setminus \{u\}$ and $p=\sum_{v\in N(u)}
p_v$. Suppose that $u$ is sensing the channel. Let $q_0$ be the
probability that the channel is idle at $u$ and let $q_1$ be the
probability that exactly one node in $N(u)$ is sending a message.
%
%
It holds that $q_0 = \prod_{v\in N(u)} (1-p_v)$ and $q_1 =
\sum_{v\in N(u)} p_v \prod_{w \in N(u)\setminus\{v\}} (1-p_w)$.
Hence,
$$
q_1 \le \sum_{v\in N(u)} p_v \frac{1}{1-\hat{p}}
\prod_{w\in N(u)} (1-p_w)
  = \frac{q_0 \cdot p}{1-\hat{p}},~~
  q_1 \ge \sum_{v\in N(u)} p_v \prod_{w\in N(u)} (1-p_w) = q_0 \cdot p.
$$
Thus we have the following lemma, which has also been derived
in~\cite{singlehop08} for the single-hop case.

\begin{lemma}\label{cl_relation}
$q_0 \cdot p \le q_1 \le \frac{q_0}{1-\hat{p}} \cdot p$.
\end{lemma}

By Lemma \ref{cl_relation}, if a node $v$ observes that the number of rounds
in which the channel is idle is essentially equal  to the number of rounds in
which exactly one message is sent, then $p = \sum_{v\in N(v)}  p_v$ is likely
to be around 1 (if $\hat{p}$ is a sufficiently small constant), which would be
ideal.
%
%
%
Otherwise, the nodes know that they need to adapt their probabilities. Thus,
if we had sufficiently many cases in which an idle channel or exactly one
message transmission is observed (which is the case if the adversary does not
heavily jam the channel and $p$ is not too large), then one can adapt the
probabilities $p_v$ just based on these two events and ignore all cases in
which the wireless channel is blocked, either because the adversary is jamming
it or because at least two messages interfere with each other (see also
\cite{idlesense} for a similar conclusion). Unfortunately, $p$ can be very
high for some reason, which requires a more sophisticated strategy for
adjusting the access probabilities.

\subsection{Protocol Description}

In {\sc Jade}, each node $v$ maintains a probability value $p_v$, a threshold
$T_v$ and a counter $c_v$. The parameters $\hat{p}, \gamma > 0$ in the
protocol are fixed and the same for every node. $\hat{p}$ may be set to any
constant value so that $0<\hat{p} \le 1/24$, and $\gamma$ should be small
enough so that the condition in Theorem~\ref{th:main1} is met.


\begin{shaded}
Initially, every node $v$ sets $T_v:=1$, $c_v:=1$ and $p_v:=\hat{p}$.
Afterwards, the {\sc Jade} protocol works in synchronized
rounds.\cancel{\footnote{Recall that we assume synchronized rounds for the
analysis only, and a non-synchronized execution of the protocol would also
work as long as all nodes operate at roughly the same speed.}} In every round,
each node $v$ decides with probability $p_v$ to send a message. If it decides
not to send a message, it checks the following two conditions:
\begin{itemize}
\item If $v$ senses an idle channel, then
$p_v:= \min\{(1+\gamma)p_v, \hat{p}\}$.
\item If $v$ successfully receives a message, then
$p_v:=(1+\gamma)^{-1}p_v$ and $T_v:=\max\{T_v-1, 1\}$.
\end{itemize}
Afterwards, $v$ sets $c_v:=c_v+1$. If $c_v>T_v$ then it does the
following: $v$ sets $c_v:=1$, and if there was no round among the
past $T_v$ rounds in which $v$ sensed a successful message
transmission {\em or an idle channel}, then $p_v:= (1+\gamma)^{-1}
p_v$ and {\em $T_v:=\min\{T_v+1,2^{1/(4\gamma)}\}$ }.
\end{shaded}

As we will see in the upcoming section, the concept of using a
multiplicative-increase-multiplicative-decrease mechanism for $p_v$ and an
additive-increase-additive-decrease mechanism for $T_v$, as well as the slight
modifications of the protocol in \cite{singlehop08}, marked in italic above,
are crucial for {\sc Jade} to work.

\section{Analysis of {\sc Jade}}\label{sec:analysis}

In contrast the description of {\sc Jade}, its stochastic analysis is rather
involved as it requires to shed light onto the complex interplay of the nodes
all following their randomized protocol in a highly dependent manner. We first
prove Theorem~\ref{th:main1} (Sections~\ref{sec:upper} and
\ref{sec:contention}) and then prove Theorem~\ref{th:main2}
(Section~\ref{sec:lower}). In order to show the theorems, we will frequently
use the following variant of the Chernoff bounds \cite{singlehop08,SSS95}.

\begin{lemma}\label{lem_chernoff}
Consider any set of binary random variables $X_1,$ $\ldots,X_n$.
Suppose that there are values $p_1,\ldots,p_n \in [0,1]$ with $\E [
\prod_{i \in S} X_i] \le \prod_{i \in S} p_i$ for every set $S
\subseteq \{1,\ldots,n\}$. Then it holds for $X=\sum_{i=1}^n X_i$
and $\mu = \sum_{i=1}^n p_i$ and any $\delta>0$ that
\[
  \Pr[X \ge (1+\delta)\mu] \le \left( \frac{e^{\delta}}{(1+\delta)^{1+\delta}}
  \right)^{\mu} \le e^{-\frac{\delta^2 \mu}{2(1+\delta/3)}}.
\]
If, on the other hand, it holds that $\E [ \prod_{i \in S} X_i] \ge
\prod_{i \in S} p_i$ for every set $S \subseteq \{1,\ldots,n\}$,
then it holds for any $0 < \delta < 1$ that
\[
  \Pr[X \le (1-\delta)\mu] \le \left( \frac{e^{-\delta}}{(1-\delta)^{1-\delta}}
  \right)^{\mu} \le e^{-\delta^2 \mu / 2}.
\]
\end{lemma}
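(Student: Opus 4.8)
The plan is to use the standard exponential-moment (Bernstein--Chernoff) method, exploiting the product-moment hypothesis to mimic the fully independent case.

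\textbf{Upper tail.} I would fix $t>0$ and apply Markov's inequality to $e^{tX}$, giving $\Pr[X \ge (1+\delta)\mu] \le e^{-t(1+\delta)\mu}\,\E[e^{tX}]$. Since each $X_i$ is binary, $e^{tX_i} = 1 + (e^t-1)X_i$, so with $a := e^t - 1 \ge 0$ the multilinear expansion yields
\[
  \E[e^{tX}] = \E\Bigl[\prod_{i=1}^n \bigl(1 + aX_i\bigr)\Bigr]
  = \sum_{S \subseteq \{1,\dots,n\}} a^{|S|}\,\E\Bigl[\prod_{i\in S}X_i\Bigr]
  \le \sum_{S} a^{|S|}\prod_{i\in S}p_i = \prod_{i=1}^n (1 + a p_i),
\]
where the inequality uses $a \ge 0$ together with the hypothesis $\E[\prod_{i\in S}X_i]\le\prod_{i\in S}p_i$ applied term by term. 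Using $1 + ap_i \le e^{ap_i}$ gives $\E[e^{tX}] \le e^{a\mu} = e^{(e^t-1)\mu}$, hence $\Pr[X \ge (1+\delta)\mu] \le \exp\bigl(\mu(e^t - 1 - t(1+\delta))\bigr)$. Optimizing over $t>0$ gives $t = \ln(1+\delta)$, which produces the first claimed bound $\bigl(e^\delta/(1+\delta)^{1+\delta}\bigr)^\mu$. The second, looser bound then follows from the elementary inequality $\delta - (1+\delta)\ln(1+\delta) \le -\delta^2/(2(1+\delta/3))$ for $\delta>0$, which I would verify by noting the two sides agree at $\delta=0$ and comparing derivatives (or by a power-series comparison).

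\textbf{Lower tail.} This is handled symmetrically: for $t<0$, Markov's inequality on $e^{tX}$ gives $\Pr[X \le (1-\delta)\mu] \le e^{-t(1-\delta)\mu}\,\E[e^{tX}]$, and the same subset expansion $\E[e^{tX}] = \sum_S (e^t-1)^{|S|}\E[\prod_{i\in S}X_i]$ is used, now together with the reversed hypothesis $\E[\prod_{i\in S}X_i] \ge \prod_{i\in S}p_i$, to obtain $\E[e^{tX}] \le e^{(e^t-1)\mu}$. Optimizing at $t = \ln(1-\delta)$ and invoking the elementary inequality $-\delta-(1-\delta)\ln(1-\delta)\le -\delta^2/2$ for $0<\delta<1$ then yields the stated bounds.

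\textbf{Main obstacle.} The delicate point is precisely the lower-tail step $\E\bigl[\prod_i(1+(e^t-1)X_i)\bigr] \le \prod_i(1+(e^t-1)p_i)$: since $e^t - 1 \in (-1,0)$ the coefficients $(e^t-1)^{|S|}$ in the subset expansion alternate in sign, so the moment hypothesis cannot be applied term by term and the odd- and even-cardinality contributions must be grouped and compared. This is exactly the technical heart of the limited-independence Chernoff bounds of \cite{SSS95} (as used in \cite{singlehop08}), which I would invoke at this point; everything else reduces to the routine MGF optimization and the two one-variable calculus inequalities above.
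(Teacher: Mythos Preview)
The paper does not prove this lemma; it merely states it and cites \cite{singlehop08,SSS95}. Your upper-tail argument is correct and is precisely the standard one underlying those references: since $e^t-1>0$, every term in the multilinear expansion of $\prod_i(1+(e^t-1)X_i)$ is nonnegative, the product-moment hypothesis applies termwise, and the usual optimization at $t=\ln(1+\delta)$ gives both displayed bounds.

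For the lower tail there is a genuine gap. You suggest that the inequality $\E\bigl[\prod_i(1+aX_i)\bigr]\le\prod_i(1+ap_i)$ with $a=e^t-1\in(-1,0)$ can be salvaged by ``grouping odd- and even-cardinality subsets,'' but under the stated hypothesis $\E[\prod_{i\in S}X_i]\ge\prod_{i\in S}p_i$ this inequality is simply false. Take $X_1=X_2$ equal to a single Bernoulli$(\tfrac12)$ variable and $p_1=p_2=\tfrac12$: the hypothesis holds for every $S$, yet $\E[(1+aX_1)(1+aX_2)]=1+a+\tfrac{a^2}{2}>(1+\tfrac{a}{2})^2=\prod_i(1+ap_i)$ for every $a\ne 0$; indeed $\Pr[X_1+X_2=0]=\tfrac12$ already exceeds the claimed bound $\bigl(e^{-\delta}/(1-\delta)^{1-\delta}\bigr)^{1}$ once $\delta$ is close enough to $1$. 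The argument in \cite{SSS95} for the lower tail does not proceed by any sign-grouping trick on this expansion; it passes to the complements $Y_i=1-X_i$ and invokes the upper-tail bound, which needs the hypothesis $\E\bigl[\prod_{i\in S}(1-X_i)\bigr]\le\prod_{i\in S}(1-p_i)$ rather than the one printed here. So your instinct to defer to \cite{SSS95} matches what the paper does, but the specific mechanism you sketch would fail, and the lower-tail half of the lemma as stated really requires that alternative hypothesis on the $1-X_i$.
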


Throughout the section we assume that $\gamma=O(1/(\log T + \log \log n))$ is
sufficiently small.

\subsection{Proof of Theorem~\ref{th:main1}} \label{sec:upper}

We first look at a slightly weaker form of adversary. We say a round $t$ is
{\em open} for a node $v$ if $v$ and at least one other node in its
neighborhood are non-jammed (which implies that $v$'s neighborhood is
non-empty). An adversary is {\em weakly $(T,1-\epsilon)$-bounded} for some $T
\in \N$ and $0<\epsilon<1$ if the adversary is $(T,1-\epsilon)$-bounded and in
addition to this, at least a constant fraction of the non-jammed rounds at
each node are open in every time interval of size $w \ge T$.


\begin{theorem} \label{th:throughput}
When running \textsc{Jade} for $\Omega([T+(\log^3 n)/(\gamma^2\epsilon)] \cdot (\log n)/\epsilon)$
rounds it holds
w.h.p.~that {\sc Jade} is constant competitive for any weakly
$(T,1-\epsilon)$-bounded adversary.
\end{theorem}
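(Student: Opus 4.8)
The plan is to establish Theorem~\ref{th:throughput} by a potential-function argument combined with a careful case analysis that distinguishes \emph{good} nodes (whose local sending probability sum $p$ is moderate) from \emph{congested} nodes (whose $p$ is large). The central quantity to control is, for each node $u$, the ``contention'' $p(u) = \sum_{v \in D(u)} p_v$ over $u$'s disk. Lemma~\ref{cl_relation} tells us that when the channel at $u$ is not jammed and $p(u) = \Theta(1)$, a constant fraction of the non-jammed rounds result in either an idle observation or a successful reception; the first contribution of the proof is to show that, over a sufficiently long window, the system is driven toward this balanced regime. For this I would track how the multiplicative updates on $p_v$ and the additive updates on $T_v$ interact: the key invariant to maintain (w.h.p., using Lemma~\ref{lem_chernoff} to handle the dependent binary indicators ``$v$ sends'', ``$v$ sees idle'', ``$v$ receives'') is that $T_v$ cannot grow too large at nodes that are not persistently congested, because a successful reception both shrinks $p_v$ and decrements $T_v$, while the cap $2^{1/(4\gamma)}$ on $T_v$ together with $\gamma = O(1/(\log T + \log\log n))$ keeps all the relevant ``epochs'' of length $T_v$ polylogarithmic in $n$ and at most linear in $T$ — which is exactly what the claimed running time $\Omega([T + (\log^3 n)/(\gamma^2\epsilon)]\cdot(\log n)/\epsilon)$ must absorb.

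Concretely, I would partition the time interval $I$ into phases and argue a dichotomy per phase. First I would show that if at some point many nodes have very large $p(u)$, then the ``$c_v > T_v$ with no idle/success in the last $T_v$ rounds'' rule fires frequently and forces geometric decay of those $p_v$'s; since each such decay step is by a $(1+\gamma)^{-1}$ factor and $p_v$ starts at $\hat p$, it takes only $O((\log n)/\gamma)$ such firings to bring $p_v$ down to $1/\mathrm{poly}(n)$, so congestion cannot persist for more than a polylogarithmic-in-$n$ (over $\epsilon$, via the weakly-bounded openness guarantee) number of rounds without the node leaving the congested regime. Conversely, once $p(u)$ is not large, Lemma~\ref{cl_relation} gives $q_1 \ge q_0 \cdot p(u)$ and $q_0 \ge (1-\hat p)^{O(1)} = \Omega(1)$ on a constant fraction of non-jammed (indeed open) rounds, so the expected number of successful receptions in $D(u)$ over such a window is $\Omega(\epsilon \cdot |I|)$; a Chernoff bound via Lemma~\ref{lem_chernoff} upgrades this to a w.h.p.\ statement, and summing $s_v(I)$ over all nodes against $\sum_v f_v(I)$ yields the constant competitiveness. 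The reason the theorem is stated for \emph{weakly} $(T,1-\epsilon)$-bounded adversaries is precisely so that a constant fraction of each node's non-jammed rounds have a non-empty set of non-jammed neighbors, which is what makes ``idle vs.\ exactly-one-sender'' a meaningful dichotomy rather than a degenerate one; I would use this openness hypothesis exactly at the point where I convert a bound on $q_0 + q_1$ into a bound on successful receptions.

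The main obstacle will be handling the \emph{dependencies} between nodes: the events driving $p_v$ up or down at neighboring nodes are highly correlated, and the protocol's feedback loop means that a node's probability at round $t$ depends on the entire history, so the random variables are far from independent. The variant of the Chernoff bound in Lemma~\ref{lem_chernoff} (bounding $\E[\prod_{i \in S} X_i]$ by $\prod p_i$ rather than assuming independence) is the tool designed to cope with this, but applying it requires carefully choosing, for each invariant, the right filtration and the right ``$p_i$'' surrogates so that the product-form domination actually holds — this is the technically delicate part, and it is also where the difference from the single-hop analysis of~\cite{singlehop08} bites, since there the one jamming sequence is shared by all nodes whereas here different disks $D(u)$ overlap in complicated ways. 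A secondary obstacle is a potential chicken-and-egg issue: bounding the number of congested rounds needs a bound on the $p_v$'s, but bounding the $p_v$'s is easiest once we know congestion is rare. I would break this circularity with an inductive/bootstrapping argument over phases — assuming the invariant holds up to phase $j$, show it is restored (w.h.p.) by the end of phase $j+1$ — and absorb the failure probabilities via a union bound over the $\mathrm{poly}(n)$-many phases, which is where the extra $\log n$ factor in the running time originates.
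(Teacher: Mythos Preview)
Your high-level plan --- bound the contention $p(u)=\sum_{v\in D(u)} p_v$ and then extract throughput via Lemma~\ref{cl_relation} --- matches the paper's, but two of the steps you sketch would not go through as written, and a key geometric idea is missing.

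First, your mechanism for dissipating congestion is incomplete. You argue that when $p(u)$ is large the ``$c_v>T_v$ with no idle/success'' rule fires often and forces geometric decay of $p_v$. But that rule fires only when \emph{every} one of the last $T_v$ rounds was blocked from $v$'s perspective. When the aggregate probability is a moderate constant (say between $5$ and $5e$), Lemma~\ref{cl_relation} gives $q_1 \ge q_0\cdot p \gg q_0$, so $v$ still sees many \emph{successful} rounds and the rule need not fire at all. The paper's Lemma~\ref{single_node} handles precisely this by splitting on whether there are many ``useful'' (idle or successful) rounds: in the useful branch it uses $q_1 \ge (4/5)(q_0+q_1)$ to show that successful receptions themselves drive $p_v$ down. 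For that split to work, every contributor to $\bar p_v$ must be audible to $v$, which is why the paper first partitions $D(u)$ into six \emph{sectors} of diameter at most~$1$; without this geometric step the argument does not localize, since nodes in $D(u)$ can be at distance up to~$2$. The passage from ``$p_S$ drops below a threshold at some point in the subframe'' (Lemma~\ref{fast:recovery}) to ``$p_S$ stays bounded throughout'' (Lemma~\ref{lemma:helper}) is then a genuine supermartingale argument via Azuma's inequality with three thresholds $\rho_{green}<\rho_{yellow}<\rho_{red}$, not a direct Chernoff application on indicators; your ``inductive/bootstrapping over phases'' does not capture this.

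Second, your non-congested branch assumes that $p(u)=O(1)$ already yields $\Omega(\epsilon|I|)$ successful receptions at $u$. It does not: if $p(u)$ is tiny, $q_1 \le q_0\cdot p(u)/(1-\hat p)$ is tiny as well, and $u$ may almost never receive. (Also, $q_0$ is not $(1-\hat p)^{O(1)}$; it is $\prod_{v\in N(u)}(1-p_v)\ge e^{-2p(u)}$.) The paper closes this gap with a different dichotomy, on the number $k$ of messages $u$ \emph{receives} in the frame: if $k\ge \beta|J|/6$ we are done; otherwise, by Lemma~\ref{lem:claim2.5}, $T_u$ is increased at most $k+\sqrt{2|F|}$ times, so $p_u$ is decreased rarely while $u$ sees $\Omega(|J|)$ idle rounds (each increasing $p_u$), forcing $p_u=\hat p$ for $\Theta(|J|)$ good rounds --- and then $u$ \emph{transmits} successfully to some non-jammed neighbor a constant fraction of the time. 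This last step requires the contention bound not only at $u$ but at every receiver $v\in D(u)$, which the paper obtains by covering the radius-$2$ disk $DD(u)$ with $20$ diameter-$1$ regions (Lemma~\ref{lem:region}) and applying Lemma~\ref{lem:contention} to a representative of each. A $\tfrac12/\tfrac12$ charging of each delivered message to sender and receiver then turns the per-node guarantee into $\sum_v s_v(I)=\Omega(\sum_v f_v(I))$. Your plan lacks the $k$-dichotomy, the $\sqrt{|F|}$ bound on $T_u$-increases, and the $DD(u)$ covering; without them the throughput conclusion does not follow from $p(u)=O(1)$ alone.
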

\begin{proof}
First, we focus on a {\em time frame} $F$ consisting of $(\alpha \log
n)/\epsilon$ {\em subframes} of size $f = \alpha [T+(\log^3
n)/(\gamma^2\epsilon)]$
each, where $f$ is a multiple of $T$ and
$\alpha$ is a sufficiently large constant. The proof needs the following three
lemmas. The first one is identical to Claim 2.5 in \cite{singlehop08}. It is
true because only successful message transmissions reduce $T_u$.

\begin{lemma}\label{lem:claim2.5}
If in a time interval $I$ the number of rounds in which a node $u$
successfully receives a message is at most $r$, then $u$ increases $T_u$ in at
most $r + \sqrt{2|I|}$ rounds in $I$.
\end{lemma}

The second lemma holds for arbitrary (not just weakly)
$(T,1-\epsilon)$-bounded adversaries and will be shown in
Section~\ref{sec:contention}.

\begin{lemma}\label{lem:contention}
For every node $u$, $\sum_{v \in D(u)} p_v = O(1)$ for at least a $(1-\epsilon
\beta)$-fraction of the rounds in time frame $F$, w.h.p., where the constant
$\beta>0$ can be made arbitrarily small.
\end{lemma}

The third lemma just follows from some simple geometric argument.

\begin{lemma} \label{lem:region}
A disk of radius 2 can be cut into at most 20 regions so that the distance
between any two points in a region is at most 1.
\end{lemma}

Consider some fixed node $u$. Let $J \subseteq F$ be the set of all non-jammed
{\em open} rounds at $u$ in time frame $F$ (which are a constant fraction of
the non-jammed rounds at $u$ because we have a weakly $(T,
1-\epsilon)$-bounded adversary). Let $p$ be a constant satisfying
Lemma~\ref{lem:contention} (i.e., $\sum_{w \in D(v)} p_w \le p$). Define
$DD(u)$ to be the disk of radius 2 around $u$ (i.e., it has twice the radius
of $D(u)$). Cut $DD(u)$ into 20 regions $R_1,\ldots,R_{20}$ satisfying
Lemma~\ref{lem:region}, and let $v_i$ be any node in region $R_i$ (if such a
node exists), where $v_i=u$ if $u \in R_i$. According to
Lemma~\ref{lem:contention} it holds for each $i$ that at least a $(1-\epsilon
\beta'/20)$-fraction of the rounds in $F$ satisfy $\sum_{w \in D(v_i)} p_w \le
p$ for any constant $\beta'>0$, w.h.p. Thus, at least a $(1-\epsilon
\beta'')$-fraction of the rounds in $F$ satisfy $\sum_{w \in D(v_i)} p_w \le
p$ for {\em every} $i$ for any constant $\beta''>0$, w.h.p. As $D(v) \subseteq
DD(u)$ for all $v \in D(u)$ and $u$ has at least $\epsilon |F|$ non-jammed
rounds in $F$, we get the following lemma, which also holds for arbitrary
$(T,1-\epsilon)$-bounded adversaries

\begin{lemma}
At least a $(1-\beta)$-fraction of the rounds in $J$ satisfy $\sum_{v \in
D(u)} p_v \le p$ and $\sum_{w \in D(v)} p_w = O(p)$ for all nodes $v \in D(u)$
for any constant $\beta>0$, w.h.p.
\end{lemma}

Let us call these rounds {\em good}. Since the probability that $u$ senses the
channel is at least $1-\hat{p}$ and the probability that the channel at $u$ is
idle for $\sum_{w \in D(u)} p_w \le p$ is equal to $\prod_{v \in N(u)} (1-p_v)
\ge \prod_{v \in N(u)} e^{-2p_v} \ge e^{-2p}$, $u$ senses an idle channel for
at least
$
  (1-\hat{p})(1-\beta)|J| e^{-2p} \ge 2 \beta |J|
$
many rounds in $J$ on expectation if $\beta$ is sufficiently small. This also
holds w.h.p. when using the Chernoff bounds under the condition that at least
$(1-\beta)|J|$ rounds in $F$ are good (which also holds w.h.p.). Let $k$ be
the number of times $u$ receives a message in $F$. We distinguish between two
cases.

\medskip

\noindent {\em Case 1:}  $k \ge \beta |J|/6$. Then {\sc Jade} is constant
competitive for $u$ and we are done.

\medskip

\noindent {\em Case 2:} $k < \beta |J|/6$. Then we know from
Lemma~\ref{lem:claim2.5} that $p_u$ is decreased at most $\beta |J|/6 +
\sqrt{2|F|}$ times in $F$ due to $c_u > T_u$. In addition to this, $p_u$ is
decreased at most $\beta |J|/6$ times in $F$ due to a received message. On the
other hand, $p_u$ is increased at least $2\beta |J|$ times in $J$ (if
possible) due to an idle channel w.h.p. Also, we know from the {\sc Jade}
protocol that at the beginning of $F$, $p_u = \hat{p}$. Hence, there must be
at least
$
  \beta(2-1/6-1/6)|J| - \sqrt{2|F|} \ge (3/2)\beta |J|
$
rounds in $J$ w.h.p. at which $p_u = \hat{p}$. As there are at least
$(1-\beta)|J|$ good rounds in $J$ (w.h.p.), there are at least $\beta |J|/2$
good rounds in $J$ w.h.p. in which $p_u = \hat{p}$. For these good rounds, $u$
has a constant probability to transmit a message and every node $v \in D(u)$
has a constant probability of receiving it, so $u$ successfully transmits
$\Theta(|J|)$ messages to at least one of its non-jammed neighbors in $F$ (on
expectation and also w.h.p.).

\medskip

If we charge $1/2$ of each successfully transmitted message to the sender and
$1/2$ to the receiver, then a constant competitive throughput can be
identified for every node in both cases above, so {\sc Jade} is constant
competitive in $F$.

It remains to show that Theorem~\ref{th:throughput} also holds for larger time
intervals than $|F|$. First, note that all the proofs are valid as long as
$\gamma \le 1/[c(\log T + \log \log n)]$ for a constant $c \ge 2$, so we can
increase $T$ and thereby also $|F|$ as long as this inequality holds. So
w.l.o.g.~we may assume that $\gamma = 1/[2(\log T + \log \log n)]$. In this
case, $2^{1/(4\gamma)} \le \sqrt{|F|}$, so our rule of increasing $T_v$ in
{\sc Jade} implies that $T_v \le \sqrt{|F|}$ at any time. This allows us to
extend the competitive throughput result from a single to any sequence of
polynomial in $n$ many time frames $F$, which completes the proof of
Theorem~\ref{th:throughput}. \hfill \qed\end{proof}

Now, let us consider the two cases of Theorem~\ref{th:main1}. Recall that we
allow here any $(T,1-\epsilon)$-bounded adversary and not just a weakly
bounded.

\subsubsection*{Case 1: the adversary is 1-uniform and the UDG is connected.}

In this case, every node has a non-empty neighborhood and therefore {\em all}
non-jammed rounds of the nodes are open. Hence, the conditions on a weakly
$(T,1-\epsilon)$-bounded adversary are satisfied. So
Theorem~\ref{th:throughput} applies, which completes the proof of
Theorem~\ref{th:main1} a).

\subsubsection*{Case 2: {\boldmath $|D(v)| \ge 2/\epsilon$} for all {\boldmath $v \in V$}.}

Consider some fixed time interval $I$ with $|I|$ being a multiple of $T$. For
every node $v \in D(u)$ let $f_v$ be the number of non-jammed rounds at $v$ in
$I$ and $o_v$ be the number of open rounds at $v$ in $I$. Let $J$ be the set
of rounds in $I$ with at most one non-jammed node. Suppose that $|J| >
(1-\epsilon/2)|I|$. Then every node in $D(u)$ must have more than
$(\epsilon/2)|I|$ of its non-jammed rounds in $J$. As these non-jammed rounds
must be serialized in $J$ to satisfy our requirement on $J$, it holds that
$|J| > \sum_{v \in D(u)} (\epsilon/2)|I| \ge (2/\epsilon) \cdot
(\epsilon/2)|I| = |I|$. Since this is impossible, it must hold that $|J| \le
(1-\epsilon/2)|I|$.

Thus, $\sum_{v \in D(u)} o_v \ge (\sum_{v \in D(u)} f_v) - |J| \ge (1/2)
\sum_{v \in D(u)} f_v$ because $\sum_{v \in D(u)} f_v \ge (2/\epsilon) \cdot
\epsilon |I| =2|I|$. Let $D'(u)$ be the set of nodes $v \in D(u)$ with $o_v
\ge f_v/4$. That is, for each of these nodes, a constant fraction of the
non-jammed time steps is open. Then $\sum_{v \in D(u) \setminus D'(u)} o_v <
(1/4)\sum_{v \in D(u)} f_v$, so $\sum_{v \in D'(u)} o_v \ge (1/2) \sum_{v \in
D(u)} o_v\geq (1/4)\sum_{v \in D(u)} f_v$.

Consider now a set $U \subseteq V$ of nodes so that $\bigcup_{u
\in U} D(u) = V$ and for every $v \in V$ there are at most 6 nodes
$u \in U$ with $v \in D(u)$ ($U$ is easy to construct in a greedy
fashion for arbitrary UDGs and also known as a {\em dominating set
of constant density}). Let $V'=\bigcup_{u \in U} D'(u)$. Since
$\sum_{v \in D'(u)} o_v \ge (1/4) \sum_{v \in D(u)} f_v$ for every
node $u \in U$, it follows that $\sum_{v \in V'} o_v \ge (1/6)
\sum_{u\in U} \sum_{v \in D'(u)} o_v \ge (1/24) \sum_{u\in U}
\sum_{v \in D(u)} f_v \ge (1/24) \sum_{v \in V} f_v$. Using that
together with Theorem~\ref{th:throughput}, which implies that {\sc
Jade} is constant competitive w.r.t. the nodes in $V'$, completes
the proof of Theorem~\ref{th:main1} b).

\subsection{Proof of Lemma~\ref{lem:contention}} \label{sec:contention}

In order to finish the proof of Theorem~\ref{th:main1}, it remains to prove
Lemma~\ref{lem:contention}. Consider any fixed node $u$. We partition $u$'s
unit disk $D(u)$ into six \emph{sectors} of equal angles from $u$,
$S_1,...,S_6$. Note that all nodes within a sector $S_i$ have distances of at
most 1 from each other, so they can directly communicate with each other (in
$D(u)$, distances can be up to 2). We will first explore properties of an
arbitrary node in one sector, then consider the implications for a whole
sector, and finally bound the cumulative sending probability in the entire
unit disk.

Recall the definition of a time frame, a subframe and $f$ in the proof of
Theorem~\ref{th:throughput}. Fix a sector $S$ in $D(u)$ and consider some
fixed time frame $F$. Let us refer to the sum of the probabilities of the
neighboring nodes of a given node $v\in S$ by $\bar{p}_v := \sum_{w\in S
\setminus \{v\}} p_w $. The following lemma shows that $p_v$ will decrease
dramatically if $\bar{p}_v$ is high throughout a certain time interval.

\begin{lemma}\label{single_node}
Consider a node $v$ in a unit disk $D(u)$. If $\bar{p}_v>5-\hat{p}$ during
\emph{all} rounds of a subframe $I$ of $F$, then $p_v$ will be at most $1/n^2$
at the end of $I$, w.h.p.
\end{lemma}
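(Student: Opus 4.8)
The plan is to control the multiplicative history of $p_v$ across the subframe $I$. The only event that increases $p_v$ is sensing an idle channel, which multiplies it by at most $1+\gamma$ (the cap at $\hat p$ only helps), whereas a successful reception and a counter-triggered decrease each multiply it by $(1+\gamma)^{-1}$; and at most one of these three can happen in any single round, since an idle or a received round invalidates the counter condition. Writing $a,b,d$ for the numbers of idle-channel, successfully-received, and counter-decrease rounds in $I$, and using $p_v\le\hat p$ at the start of $I$, this gives $p_v(\text{end of }I)\le\hat p\,(1+\gamma)^{\,a-b-d}$. Since $\ln(1+\gamma)\ge\gamma/2$, it thus suffices to show $b+d-a=\Omega((\log n)/\gamma)$ w.h.p., with a hidden constant we may take as large as we like.

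I would next isolate the two mechanisms pulling $p_v$ down. First, since every node of $S$ other than $v$ lies within distance $1$ of $v$, we have $S\setminus\{v\}\subseteq N(v)$, so $p(t):=\sum_{w\in N(v)}p_w(t)\ge\bar p_v>5-\hat p$ at every round $t$ of $I$; applying Lemma~\ref{cl_relation} round by round (conditioning on the history $\mathcal{F}_t$ before $t$), the conditional probability that $v$ receives a message is at least $(5-\hat p)$ times the conditional probability that $v$ senses an idle channel, and the latter is at most $q_0(t)=\prod_{w\in N(v)}(1-p_w(t))\le e^{-(5-\hat p)}$, a constant. Second, the counter rule partitions $I$ into consecutive blocks on which $c_v$ climbs from $1$ to its (current) threshold $T_v$; a complete block containing neither an idle nor a received round produces a counter-decrease, the blocks are disjoint and each has length at most $2^{1/(4\gamma)}$, so $d\ge|I|/2^{1/(4\gamma)}-O(1)-(a+b)$. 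Combining the two, $b+d-a\ge|I|/2^{1/(4\gamma)}-O(1)-2a$.

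The argument then splits on $\Pi:=\sum_{t\in I}\Pr[\,v\text{ senses idle at }t\mid\mathcal{F}_t\,]$, with threshold $\Lambda:=C(\log n)/\gamma$ for a large constant $C$. If $\Pi\ge\Lambda$: the pointwise domination makes $\sum_t(\mathbf{1}[\text{recv}_t]-\mathbf{1}[\text{idle}_t])$ a submartingale whose conditional-mean increments sum to at least $(4-\hat p)\Pi$ and whose predictable quadratic variation is at most $\sum_t(\Pr[\text{recv}_t\mid\mathcal{F}_t]+\Pr[\text{idle}_t\mid\mathcal{F}_t])$. The crucial estimate is $\sum_t\Pr[\text{recv}_t\mid\mathcal{F}_t]\le\sum_t q_1(t)=O\!\big(\Pi\ln(|I|/\Pi)\big)$, obtained from $q_1(t)\le p(t)e^{-p(t)}/(1-\hat p)$ (Lemma~\ref{cl_relation} together with $q_0(t)\le e^{-p(t)}$) by splitting the sum at $p(t)=\ln(|I|/\sum_s q_0(s))$; as $\ln|I|=O(1/\gamma)$, the quadratic variation is $O(\Pi/\gamma)$, and a Bernstein-type martingale inequality (Freedman) then yields $b-a\ge(4-\hat p)\Pi/2=\Omega((\log n)/\gamma)$ w.h.p. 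If $\Pi<\Lambda$: the same martingale bound applied to $\sum_t\mathbf{1}[\text{idle}_t]$ gives $a\le 2\Lambda$ w.h.p., hence $b+d-a\ge|I|/2^{1/(4\gamma)}-O(1)-4\Lambda$, and it remains to check $|I|/2^{1/(4\gamma)}=\Omega((\log n)/\gamma)$ with a constant large enough to swallow the $4\Lambda$. Here the concrete parameters matter: taking (w.l.o.g., as in the proof of Theorem~\ref{th:throughput}) $\gamma=1/(2(\log T+\log\log n))$ makes $2^{1/(4\gamma)}=\sqrt{T\log n}$ and $1/\gamma^2=4(\log T+\log\log n)^2$, and plugging $|I|=\alpha[T+(\log^3 n)/(\gamma^2\epsilon)]$ into $|I|/2^{1/(4\gamma)}$ and splitting on whether the first or the second term of $|I|$ dominates, a direct computation gives $|I|/2^{1/(4\gamma)}\ge\Omega((\log n)/\gamma)$ with hidden constant proportional to $\alpha$; choosing $\alpha$ large closes this case.

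The step I expect to be the real obstacle is exactly this last balancing in the case $\Pi<\Lambda$: the channel around $v$ is then so heavily loaded that both idle channels and successful receptions are rare, so $p_v$ is driven down essentially only by the counter, which fires at most once per block of length up to $2^{1/(4\gamma)}$, and one must verify that the subframe length --- in particular its $(\log^3 n)/(\gamma^2\epsilon)$ term, in conjunction with the cap $2^{1/(4\gamma)}$ on $T_v$ --- was chosen large enough that $\Omega((\log n)/\gamma)$ counter-decrements still accumulate over $I$. The other place demanding care rather than a black-box bound is the variance estimate $\sum_t q_1(t)=O(\Pi\ln(|I|/\Pi))$ used in the case $\Pi\ge\Lambda$; a plain Azuma bound (error $O(\sqrt{|I|\log n})$) would be too weak when $T$ is large.
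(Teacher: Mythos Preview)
Your argument is essentially correct, but it takes a substantially more complicated route than the paper's. Both proofs share the same skeleton: bound $p_v$ at the end of $I$ by $\hat p\,(1+\gamma)^{a-b-d}$ and split into two cases according to how much ``unblocked'' activity there is at $v$; in the low-activity case, the counter mechanism alone forces $\Omega((\log n)/\gamma)$ decreases. The difference lies in how the high-activity case is handled.

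The paper's simplification is to condition on the set $\mathcal U$ of \emph{useful} rounds --- those in which, ignoring $v$'s own action, the channel at $v$ is either idle or carries exactly one transmission. Conditioned on a round being useful, the ratio of reception to idle is $q_1/q_0\ge\bar p_v>5-\hat p$, so $\Pr[\text{recv}\mid\text{useful}]\ge 4/5$; a plain Chernoff bound over the $|\mathcal U|\ge(c\ln n)/\gamma$ useful rounds (together with $p_v\le\hat p$ to ensure $v$ listens in at least $2/3$ of them) immediately gives $b-a\ge(1/3)|\mathcal U'|=\Omega((\log n)/\gamma)$. No martingale argument is needed, and no variance estimate enters.

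Your approach instead works with the unconditional round-by-round probabilities and applies a Freedman-type inequality to the submartingale $\sum_t(\mathbf 1[\text{recv}_t]-\mathbf 1[\text{idle}_t])$. To make this go through you must bound the predictable quadratic variation, which forces the nontrivial estimate $\sum_t q_1(t)=O(\Pi\ln(|I|/\Pi))$ via $q_1(t)\le p(t)e^{-p(t)}/(1-\hat p)$ and a threshold split on $p(t)$. This is correct (and the observation that $\ln|I|=O(1/\gamma)$ under the standing assumption on $\gamma$ is exactly what is needed), but all of it is bypassed by the paper's conditioning on useful rounds. Your $\Pi<\Lambda$ case matches the paper's Case~1 in substance; the parameter check that $|I|/2^{1/(4\gamma)}=\Omega((\log n)/\gamma)$ with a large enough constant is precisely the role of the lower bound $f\ge 2[(3c\ln n)/\gamma]^2$ in the paper (combined with the cap $T_v\le 2^{1/(4\gamma)}$).
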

\begin{proof}
 We say that a round is {\em useful} for node $v$ if
from $v$'s perspective there is an idle channel or a successful
transmission at that round (when ignoring the action of $v$);
otherwise the round is called {\em non-useful}. Note that in a
non-useful round, according to our protocol, $p_v$ will either
decrease (if the threshold $T_v$ is exceeded) or remain the same. On
the other hand, in a {\em useful} round, $p_v$ will increase (if $v$
senses an idle channel), decrease (if $v$ senses a successful
transmission) or remain the same (if $v$ sends a message). Hence,
$p_v$ can only increase during useful rounds of $I$. Let
$\mathcal{U}$ be the set of useful rounds in $I$ for our node $v$.
We distinguish between two cases, depending on the cardinality
$|\mathcal{U}|$. In the following, let $p_v(0)$ denote the
probability of $v$ at the beginning of $I$ (which is at most
$\hat{p}$). Suppose that $f \ge 2 [(3c \ln n)/\gamma]^2$ for a
sufficiently large constant $c$.
(This lower bound coincides with our definition of $f$ in the proof
of Theorem~\ref{th:throughput}.)

\emph{Case 1:} Suppose that $|\mathcal{U}| < (c \ln n)/\gamma$, that
is, many rounds are blocked and $p_v$ can increase only rarely. As
there are at least $(3c \ln n)/\gamma$ occasions in $I$ in which
$c_v>T_v$ and $|\mathcal{U}| < (c \ln n)/\gamma$, in at least $(2c
\ln n)/\gamma$ of these occasions $v$ only saw blocked channels for
$T_v$ consecutive rounds and therefore decides to increase $T_v$ and
decrease $p_v$. Hence, at the end of $I$,
\begin{eqnarray*}
  p_v \le (1+\gamma)^{|\mathcal{U}|-2c\ln n/\gamma} p_v(0)
   \le (1+\gamma)^{-c \ln n/\gamma} p_v(0)
   \le  e^{-c \ln n} = 1/n^c .
\end{eqnarray*}

\emph{Case 2:} Next, suppose that $|\mathcal{U}| \ge (c \ln
n)/\gamma$. We will show that many of these useful rounds will be
successful such that $p_v$ decreases. Since $p_v \le \hat{p} \le
1/24$ throughout $I$, it follows from the Chernoff bounds that
w.h.p.~$v$ will sense the channel for at least a fraction of $2/3$
of the useful rounds w.h.p. Let this set of useful rounds be called
$\mathcal{U}'$. Consider any round $t \in \mathcal{U}'$. Let $q_0$
be the probability that there is an idle channel at round $t$ and
$q_1$ be the probability that there is a successful transmission at
$t$. It holds that $q_0+q_1 = 1$. From Lemma~\ref{cl_relation} we
also know that $q_1 \ge q_0 \cdot \bar{p}_v$. Since $\bar{p}_v >
5-\hat{p}$ for all rounds in $I$, it follows that $q_1 \ge 4/5$ for
every round in $\mathcal{U}'$. Thus, it follows from the Chernoff
bounds that for at least $2/3$ of the rounds in $\mathcal{U}'$, $v$
will sense a successful transmission w.h.p. Hence, at the end of $I$
it holds w.h.p.~that
\begin{eqnarray*}
 p_v \le (1+\gamma)^{-(1/3) \cdot |\mathcal{U}'|} p_v(0)
     \le (1+\gamma)^{-(1/3) \cdot (2c/3) \ln n / \gamma} p_v(0)
      \le  e^{-(2c/9) \ln n} = 1/n^{2c/9} .
\end{eqnarray*}
Combining the two cases with $c \ge 9$ results in the lemma. \hfill
\qed
\end{proof}

Given this property of the individual probabilities, we can derive a
bound for the cumulative probability of an entire sector $S$. In
order to compute $p_S=\sum_{v\in S}p_v$, we introduce three
thresholds, a low one, $\rho_{green}=5$, one in the middle,
$\rho_{yellow}=5e$, and a high one, $\rho_{red}=5e^2$. The following
three lemmas provide some important insights about these
probabilities.
\begin{lemma}\label{fast:recovery}
For any subframe $I$ in $F$ and any initial value of $p_S$ in $I$ there is at
least one round in $I$ with $p_S \le \rho_{green}$ w.h.p.
\end{lemma}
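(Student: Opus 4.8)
The plan is to argue by contradiction: suppose that $p_S > \rho_{green} = 5$ during \emph{every} round of the subframe $I$. I want to show this forces $p_S$ to drop below $5$ by the end of $I$, contradicting the assumption. The key observation is that if $p_S > 5$ throughout $I$, then for \emph{every} node $v \in S$ we have $\bar p_v = p_S - p_v \ge p_S - \hat p > 5 - \hat p$ throughout $I$, since $p_v \le \hat p$ always. Hence the hypothesis of Lemma~\ref{single_node} is satisfied simultaneously for every node $v$ in the sector $S$. Applying Lemma~\ref{single_node} to each such $v$ and taking a union bound over the at most $n$ nodes in $S$, we conclude that w.h.p.\ every $v \in S$ has $p_v \le 1/n^2$ at the end of $I$.

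From this I immediately get $p_S = \sum_{v \in S} p_v \le n \cdot (1/n^2) = 1/n \le 1 < 5 = \rho_{green}$ at the end of $I$ (for $n$ sufficiently large), which contradicts the assumption that $p_S > \rho_{green}$ during all rounds of $I$, in particular during the last round. Therefore there must be at least one round in $I$ with $p_S \le \rho_{green}$, and this holds w.h.p.

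I do not expect a serious obstacle here, since all the heavy lifting is already done in Lemma~\ref{single_node}; the only care needed is (i) to note that the hypothesis $\bar p_v > 5 - \hat p$ is implied \emph{uniformly over the subframe} by $p_S > 5$ holding uniformly over the subframe, and (ii) to absorb the union bound over the (at most $n$) nodes of the sector into the ``w.h.p.'' by starting from a success probability $1 - 1/n^c$ with $c$ one larger than the target, which is standard. One minor point worth stating explicitly is that Lemma~\ref{single_node} is phrased for a subframe $I$ of $F$, which is exactly the setting of the present lemma, so no rescaling of the time interval is required.
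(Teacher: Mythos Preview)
Your proof is correct and follows essentially the same approach as the paper's own proof: argue by contradiction, observe that $p_S>\rho_{green}$ throughout $I$ forces $\bar p_v>5-\hat p$ for every $v\in S$, invoke Lemma~\ref{single_node} for each node, and sum to get $p_S\le 1/n$ at the end of $I$. Your explicit remarks about the union bound and about the hypothesis of Lemma~\ref{single_node} holding uniformly over the subframe are details the paper leaves implicit, but the argument is the same.
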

\begin{proof}
We prove the lemma by contradiction. Suppose that throughout the
entire interval $I$, $p_S > \rho_{green}$. Then it holds for every
node $v \in S$ that $\bar{p}_v > \rho_{green} - \hat{p}$ throughout
$I$. In this case, however, we know from Lemma~\ref{single_node},
that $p_v$ will decrease to at most $1/n^2$ at the end of $I$ w.h.p.
Hence, all nodes $v \in S$ would decrease $p_v$ to at most $1/n^2$
at the end of $I$ w.h.p., which results in $p_S \le 1/n$. This
contradicts our assumption, so w.h.p.~there must be a round $t$ in
$I$ at which $p_S\le \rho_{green}$. \hfill \qed\end{proof}

\begin{lemma}\label{lemma:helper}
For any time interval $I$ in $F$ of size $f$ and any sector $S$ it holds that
if $p_S \le \rho_{green}$ at the beginning of $I$, then $p_S \le
\rho_{yellow}$ throughout $I$, w.m.p. Similarly, if $p_S \le \rho_{yellow}$ at
the beginning of $I$, then $p_S \le \rho_{red}$ throughout $I$, w.m.p.
\end{lemma}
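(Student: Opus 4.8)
The plan is to prove both statements simultaneously, exploiting that $\rho_{yellow}=e\cdot\rho_{green}$ and $\rho_{red}=e\cdot\rho_{yellow}$: in each case it suffices to show that, starting from $p_S\le\rho$, the quantity $p_S$ never grows by a factor larger than $e$ during $I$, where $\rho\in\{\rho_{green},\rho_{yellow}\}$. Suppose not, let $\tau$ be the first round of $I$ with $p_S>e\rho$, and let $a$ be the last round before $\tau$ with $p_S(a)\le\rho$ (it exists, since $p_S\le\rho$ at the start of $I$). Then on the interval $J=(a,\tau]$ we have $p_S>\rho\ge\rho_{green}$, hence $\bar{p}_v=\sum_{w\in S\setminus\{v\}}p_w>\rho_{green}-\hat{p}$ for every $v\in S$, and $p_S(\tau)/p_S(a)>e$. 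I will derive a contradiction from the fact that $p_S$ grew by more than a factor $e$ over such a ``high'' interval $J$; the second statement is the same argument with $\rho_{green}$ replaced by $\rho_{yellow}$, which only strengthens the bias exploited below.

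Two structural facts drive the proof. First, all nodes of a sector are pairwise adjacent, so whenever some node of $S$ transmits, every node of $S$ senses a busy channel (a collision or that transmission) and none of their probabilities increases; moreover in every round $p_S$ changes by a factor at most $1+\gamma$. Hence $p_S$ can strictly increase only in rounds in which no node of $S$ transmits, and writing $k_v(t)=\log_{1+\gamma}(p_v(t)/\hat{p})\le 0$ (so that $p_S(t)=\hat{p}\sum_{v\in S}(1+\gamma)^{k_v(t)}$), a growth of $p_S$ by a factor $>e$ over $J$ forces $J$ to contain more than $\log_{1+\gamma}e\ge 1/\gamma$ rounds with no transmission in $S$. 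Second, by Lemma~\ref{cl_relation} applied at $v$ (with $p=\sum_{w\in N(v)}p_w\ge\bar{p}_v>\rho_{green}-\hat{p}>4$), the probability $q_1$ of exactly one transmission in $N(v)$ satisfies $q_1\ge p\,q_0> 4q_0$; thus, among the rounds in which $v$ stays silent and the channel carries at most one transmission, a successful reception (which multiplies $p_v$ by $(1+\gamma)^{-1}$) is conditionally more than four times as likely as an idle channel (which multiplies $p_v$ by $1+\gamma$). Note also that a jammed round never increases $p_v$, so jamming cannot help $p_S$ grow.

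Given these, I would split on the length of $J$, in the spirit of the proof of Lemma~\ref{single_node}. If $|J|\,e^{-\rho_{green}}\le \tfrac{1}{2\gamma}$, the number of no-transmission rounds in $J$ is a sum of indicators each of probability at most $e^{-\rho_{green}}$, and Lemma~\ref{lem_chernoff} shows it exceeds $1/\gamma$ only with probability $e^{-\Omega(1/\gamma)}=(\log n)^{-\Omega(1)}$, negligible w.m.p.\ since $\gamma=O(1/(\log T+\log\log n))$. If $J$ is longer, then either it is long enough that the argument behind Lemma~\ref{single_node}/Lemma~\ref{fast:recovery} applies (a suitably scaled-down version of it), in which case every $p_v$, hence $p_S$, would be driven far below $\rho_{green}$ by the end of $J$, contradicting that $p_S>\rho_{green}$ throughout $J$; or, in the intermediate range, the $q_1>4q_0$ bias together with Lemma~\ref{lem_chernoff} forces the successful receptions in $J$ to outnumber the idle senses so heavily that no $k_v$ can have net increase as large as $1/\gamma$ over $J$, so $p_S(\tau)/p_S(a)\le e$, again w.m.p. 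Finally, there are at most $f$ disjoint maximal ``high'' intervals inside $I$, and the contradiction is always witnessed by the (node-independent) count of no-transmission rounds of the interval, so a union bound over these intervals — not over the nodes of $S$ — finishes the proof.

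I expect the probabilistic bookkeeping to be the main obstacle. First, $J$ is delimited by stopping times, so the Chernoff/martingale estimates must be stated uniformly, e.g.\ as an exponential supermartingale in $k_v$ (base comparable to $\bar{p}_v>4$) stopped at the exit from the high region, combined with Doob's maximal inequality, rather than for a fixed window. Second, the ``$>4{:}1$'' reception-versus-idle bias has to be established conditionally on the entire history, so that the associated martingale is genuinely a supermartingale despite the strong dependence among the many nodes of $S$ that may simultaneously sense an idle channel; and care is needed to keep the union bound over maximal high intervals only, since a per-node union bound would be too lossy for the stated range of $\gamma$. Third, and most delicately, adaptive jamming can suppress receptions in individual rounds, so the comparison between receptions and idle senses can only be cashed out after averaging over a window of length $\ge T$ (which is precisely why $f$ is taken to be a multiple of $T$); the redeeming point is that jamming also suppresses idle senses and hence can only decelerate the growth of $p_S$, but converting this observation into the clean bound above is where most of the technical work lies.
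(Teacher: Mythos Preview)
Your approach diverges substantially from the paper's, and the intermediate-length case has a real gap.

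The paper never splits on $|J|$ and never reasons about individual $k_v$'s. Instead it works directly with the aggregate $q_t=\log_{1+\gamma}p_S$ and proves a \emph{coupled} drift/increment bound: after decomposing $p_S=p_S^{(0)}+p_S^{(1)}+p_S^{(2)}$ according to whether nodes of $S$ see $0$, $1$, or $\ge 2$ transmitters (or jamming) from $D(\cdot)\setminus S$, it shows that whenever $p_S\ge\rho_{green}$ there is a round-dependent $\phi_t\in[0,1]$ with $\E[q_{t+1}\mid\mathcal F_t]\le q_t-\phi_t$ and simultaneously $q_{t+1}\le q_t+2\phi_t$. Azuma applied to $q'_t=q_t+\sum_{i<t}\phi_i$ then gives
\[
\Pr\!\left[q_T>q_0+\tfrac1\gamma\right]\le\exp\!\left(-\frac{(1/\gamma+\sum\phi_i)^2}{2\sum\phi_i^2}\right)\le e^{-1/\gamma},
\]
uniformly in $|I'|$ and in the adversary's choices, after which a union bound over the $\binom{f}{2}$ possible sub-intervals yields the w.m.p.\ statement. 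The point is that the increment bound and the drift are controlled by the \emph{same} quantity $\phi_t$; this is what makes the Azuma exponent independent of the interval length.

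Your argument does not achieve this. The short-$J$ count of no-transmission rounds is fine, but once $|J|\gg e^{\rho_{green}}/\gamma$ you fall back on the per-node bias $q_1>4q_0$ and claim ``no $k_v$ can have net increase as large as $1/\gamma$''. Two problems: (i) this is a statement about \emph{every} $v\in S$, so despite your closing remark you do need a union bound over up to $n$ nodes, and a per-node failure probability of order $e^{-\Theta(1/\gamma)}=(\log n)^{-\Theta(1)}T^{-\Theta(1)}$ does not survive multiplication by $n$; (ii) even for a single node, a plain Azuma/Chernoff on $k_v$ with increments bounded by $1$ gives an exponent of order $(1/\gamma)^2/|J|$, which is useless once $|J|$ is comparable to $f=\Theta((\log^3 n)/(\gamma^2\epsilon))$. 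You need an increment bound that scales with the drift, and your per-node process does not supply one. Your closing sentence that ``the contradiction is always witnessed by the node-independent count of no-transmission rounds'' is only true for the short case; in the intermediate case it is the per-node reception/idle balance, not that count, that carries the argument, so the asserted avoidance of a node union bound does not hold.

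The missing idea is precisely the aggregate treatment: compute $\E[p'_S\mid\mathcal F_t]$ as a function of the split $p_S^{(0)},p_S^{(1)},p_S^{(2)}$ (which absorbs both jamming and external transmissions), extract a single $\phi_t$ that bounds both the negative drift and the maximal positive jump of $\log p_S$, and then run Azuma on the compensated process. This removes the case split on $|J|$, removes any per-node union bound, and gives the $e^{-1/\gamma}$ tail uniformly.
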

\begin{proof}
It suffices to prove the lemma for the case that initially $p_S \le
\rho_{green}$ as the other case is analogous. Consider some fixed
round $t$ in $I$. Let $p_S$ be the cumulative probability at the
beginning of $t$ and $p'_S$ be the cumulative probability at the end
of $t$. Moreover, let $p_{S}^{(0)}$ denote the cumulative
probability of the nodes $w \in S$ with no transmitting node in
$D(w) \setminus S$ in round $t$. Similarly, let $p_{S}^{(1)}$ denote
the cumulative probability of the nodes $w \in S$ with a single
transmitting node in $D(w) \setminus S$, and let $p_{S}^{(2)}$ be
the cumulative probability of the nodes $w \in S$ that experience a
blocked round either because they are jammed or at least two nodes
in $D(w) \setminus S$ are transmitting at $t$. Certainly, $p_S =
p_S^{(0)}+p_S^{(1)}+p_S^{(2)}$. Our goal is to determine $p'_S$ in
this case. Let $q_0(S)$ be the probability that all nodes in $S$
stay silent, $q_1(S)$ be the probability that exactly one node in
$S$ is transmitting, and $q_2(S) = 1-q_0(S)-q_1(S)$ be the
probability that at least two nodes in $S$ are transmitting.

%
When ignoring the case that $c_v > T_v$ for a node $v \in S$ at
round $t$, it holds:
\begin{eqnarray*}
\E[p'_S] &=& q_0(S) \cdot [(1+\gamma) p_S^{(0)} + (1+\gamma)^{-1}
p_S^{(1)} + p_S^{(2)}] \\
        && +
             q_1(S) \cdot [(1+\gamma)^{-1} p_S^{(0)} + p_S^{(1)}+p_S^{(2)}] \\
        && + q_2(S) \cdot [p_S^{(0)} + p_S^{(1)} + p_S^{(2)}] \\
\end{eqnarray*}
This is certainly also an upper bound for $\E[p'_S]$ if $c_v > T_v$
for a node $v \in S$ because $p_v$ will never be increased (but
possibly decreased) in this case. Now, consider the event $E_2$ that
at least two nodes in $S$ transmit a message. If $E_2$ holds, then
$\E[p'_S] = p'_S = p_S$, so there is no change in the system. On the
other hand, assume that $E_2$ does not hold. Let
$q'_0(S)=q_0(S)/(1-q_2(S))$ and $q'_1(S)=q_1(S)/(1-q_2(S))$ be the
probabilities $q_0(S)$ and $q_1(S)$ under the condition of $\neg
E_2$. Then we distinguish between three cases.

\noindent {\it Case 1:} $p_S^{(0)}=p_S$. Then
\begin{eqnarray*}
  \E[p'_S] & \le & q'_0(S) \cdot (1+\gamma) p_S + q'_1(S) \cdot (1+\gamma)^{-1} p_S \\
  & = & ((1+\gamma)q'_0(S) + (1+\gamma)^{-1}q'_1(S)) p_S .
\end{eqnarray*}
From Lemma \ref{cl_relation} we know that $q_0(S) \le q_1(S) / p_S$,
so $q'_0(S) \le q'_1(S) / p_S$. If $p_S \ge \rho_{green}$, then
$q'_0(S) \le q'_1(S)/5$. Hence,
\begin{eqnarray*}
  \E[p'_S] & \le & ((1+\gamma)/6 + (1+\gamma)^{-1}5/6) p_S \le (1+\gamma)^{-1/2} p_S
\end{eqnarray*}
since $\gamma=o(1)$. On the other hand, $p'_S \le (1+\gamma) p_S$ in
any case.

\medskip

\noindent {\it Case 2}: $p_S^{(1)}=p_S$. Then
\begin{eqnarray*}
  \E[p'_S] & \le & q'_0(S) \cdot (1+\gamma)^{-1} p_S + q'_1(S) p_S \\
  & = & (q'_0(S)/(1+\gamma) + (1-q'_0(S))) p_S = (1 - q'_0(S)
  \gamma/(1+\gamma))p_S .
\end{eqnarray*}
Now, it holds that $1-x \gamma/(1+\gamma) \le (1+\gamma)^{-x/2}$ for
all $x \in [0,1]$ because from the Taylor series of $e^x$ and
$\ln(1+x)$ it follows that
\[
  (1+\gamma)^{-x/2} \ge 1- (x \ln (1+\gamma))/2 \ge 1-(x (1-\gamma/2)\gamma)/2
\]
and
\[
  1 - x \gamma/(1+\gamma) \le 1-(x (1-\gamma/2)\gamma)/2
\]
for all $x, \gamma \in [0,1]$ as is easy to check. Therefore, when
defining $\varphi = q'_0(S)$, we get $\E[p'_S] \le
(1+\gamma)^{-\varphi/2} p_S$. On the other hand, $p'_S \le p_S \le
(1+\gamma)^{\varphi} p_S$.

\medskip

\noindent {\it Case 3}: $p_S^{(2)}=p_S$. Then for $\varphi=0$,
$\E[p'_S] \le p_S = (1+\gamma)^{-\varphi/2} p_S$ and $p'_S \le p_S =
(1+\gamma)^{\varphi} p_S$.

\medskip

Combining the three cases and taking into account that
$p_S^{(0)}+p_S^{(1)}+p_S^{(2)}=p_S$, we obtain the following result.

\begin{lemma}
There is a $\phi \in [0,1]$ (depending on $p_S^{(0)}$, $p_S^{(1)}$
and $p_S^{(2)}$) so that
\begin{eqnarray}
  \E[p'_S] \le (1+\gamma)^{-\phi} p_S \quad \mbox{and} \quad p'_S \le (1+\gamma)^{2\phi} p_S .
  \label{eq:pS}
\end{eqnarray}
\end{lemma}
\begin{proof}
Let $a=(1+\gamma)^{1/2}$, $b=(1+\gamma)^{\varphi/2}$ for the
$\varphi$ defined in Case 2, and $c=1$. Furthermore, let $x_0 =
p_S^{(0)}/p_S$, $x_1 = p_S^{(1)}/p_S$ and $x_2 =p_S^{(2)}/p_S$.
Define $\phi = - \log_{1+\gamma} ((1/a)x_0 + (1/b)x_1 + (1/c)x_2)$.
Then we have
\[
  \E[p'_S] \le (1+\gamma)^{-1/2} p_S^{(0)} + (1+\gamma)^{-\varphi/2} p_S^{(1)} + p_S^{(2)}
  = (1+\gamma)^{-\phi} p_S .
\]
We need to show that for this $\phi$, also $p'_S \le
(1+\gamma)^{2\phi} p_S$. As $p'_S \le (1+\gamma) p_S^{(0)} +
(1+\gamma)^{\varphi} p_S^{(1)} + p_S^{(2)}$, this is true if
\[
  a^2 x_0 + b^2 x_1 + c^2 x_2 \le \frac{1}{((1/a)x_0 + (1/b)x_1 + (1/c)x_2)^2}
\]
or
\begin{eqnarray}
  ((1/a)x_0 + (1/b)x_1 + (1/c)x_2)^2 (a^2 x_0 + b^2 x_1 + c^2 x_2) \le 1
  \label{eq:helper}
\end{eqnarray}
To prove this, we need two claims whose proofs are tedious but
follow from standard math.

\begin{claim}
For any $a,b,c>0$ and any $x_0,x_1,x_2>0$ with $x_0+x_1+x_2=1$,
\[
  (a x_0 + b x_1 + c x_2)^2 \le (a^2 x_0 + b^2 x_1 + c^2 x_2)
\]
\end{claim}

\begin{claim}
For any $a,b,c>0$ and any $x_0,x_1,x_2>0$ with $x_0+x_1+x_2=1$,
\[
  ((1/a)x_0 + (1/b)x_1 + (1/c)x_2)(a x_0 + b x_1 + c x_2) \le 1
\]
\end{claim}

Combining the claims, Equation~\ref{eq:helper} follows, which
completes the proof. \hfill \qed\end{proof}

Hence, for any outcome of $E_2$, $\E[p'_S] \le (1+\gamma)^{-\varphi}
p_S$ and $p'_S \le (1+\gamma)^{2\varphi} p_S$ for some $\varphi \in
[0,1]$. If we define $q_S = \log_{1+\gamma} p_S$, then it holds that
$\E[q'_S] \le q_S - \varphi$. For any time $t$ in $I$, let $q_t$ be
equal to $q_S$ at time $t$ and $\varphi_t$ be defined as $\varphi$
at time $t$. Our calculations above imply that as long as $p_S \in
[\rho_{green}, \rho_{yellow}]$, $\E[q_{t+1}] \le q_t - \varphi_t$
and $q_{t+1} \le q_t + 2\varphi_t$.

Now, suppose that within subframe $I$ we reach a point $t$ when $p_S
> \rho_{yellow}$. Since we start with $p_S \le \rho_{green}$, there
must be a time interval $I' \subseteq I$ so that right before $I'$,
$p_S \le \rho_{green}$, during $I'$ we always have $\rho_{green} <
p_S \le \rho_{yellow}$, and at the end of $I'$, $p_S >
\rho_{yellow}$. We want to bound the probability for this to happen.

Consider some fixed interval $I'$ with the properties above, i.e.,
with $p_S \le \rho_{green}$ right before $I'$ and $p_S \ge
\rho_{green}$ at the first round of $I'$, so initially, $p_S \in
[\rho_{green}, (1+\gamma)\rho_{green}]$. We use martingale theory to
bound the probability that in this case, the properties defined
above for $I'$ hold. Consider the rounds in $I'$ to be numbered from
1 to $|I'|$, let $q_t$ and $\varphi_t$ be defined as above, and let
$q'_t = q_t + \sum_{i=1}^{t-1} \varphi_i$. It holds that
\begin{eqnarray*}
  \E[q'_{t+1}] = \E[q_{t+1} + \sum_{i=1}^{t} \varphi_i] = \E[q_{t+1}] + \sum_{i=1}^{t}\varphi_i
  \le q_t - \varphi_t + \sum_{i=1}^{t}\varphi_i
  = q_t + \sum_{i=1}^{t-1} \varphi_i
  = q'_t.
\end{eqnarray*}
Moreover, it follows from Inequality (\ref{eq:pS}) that for any
round $t$, $p'_S \le (1+\gamma)^{2\varphi_t} p_S$. Therefore,
$q_{t+1} \le q_t + 2\varphi_t$, which implies that $q'_{t+1} \le
q'_t + \varphi_t$. Hence, we can define a martingale $(X_t)_{t \in
I'}$ with $\E[X_{t+1}] = X_t$ and $X_{t+1} \le X_t + \varphi_t$ that
stochastically dominates $q'_t$. Recall that a random variable $Y_t$
{\em stochastically dominates} a random variable $Z_t$ if for any
$z$, $\Pr[Y_t \ge z] \ge \Pr[Z_t \ge z]$. In that case, it is also
straightforward to show that $\sum_i Y_i$ stochastically dominates
$\sum_i Z_i$, which we will need in the following. Let $T = |I'|$.
We will make use of Azuma's inequality to bound $X_T$.

\begin{fact}[Azuma Inequality]
Let $X_0, X_1, \ldots$ be a martingale satisfying the property that
$X_i \le X_{i-1} + c_i$ for all $i\ge 1$. Then for any $\delta \ge
0$,
\[
  \Pr[X_T > X_0 + \delta] \le e^{-\delta^2/(2 \sum_{i=1}^T c_i^2)}.
\]
\end{fact}

Thus, for $\delta = 1/\gamma + \sum_{i=1}^T \varphi_i$ it holds in
our case that
\[
  \Pr[X_T > X_0 + \delta] \le e^{-\delta^2/(2 \sum_{i=1}^T
  \varphi_i^2)}.
\]
This implies that
\[
  \Pr[q'_T > q'_0 + \delta] \le e^{-\delta^2/(2 \sum_{i=1}^T
  \varphi_i^2)},
\]
for several reasons. First of all, stochastic dominance holds as
long as $p_S \in [\rho_{green},\rho_{yellow}]$, and whenever this is
violated, we can stop the process as the requirements on $I'$ would
be violated, so we would not have to count that probability towards
$I'$. Therefore,
\[
  \Pr[q_T > q_0 + 1/\gamma] \le e^{-\delta^2/(2 \sum_{i=1}^T
  \varphi_i^2)}.
\]
Notice that $q_T > q_0 + 1/\gamma$ is required so that $p_S >
\rho_{yellow}$ at the end of $I'$, so the probability bound above is
exactly what we need. Let $\varphi = \sum_{i=1}^T \varphi_i$. Since
$\varphi_i \le 1$ for all $i$, $\varphi \ge \sum_{i=1}^T
\varphi_i^2$. Hence,
\begin{eqnarray*}
  \frac{\delta^2}{2 \sum_{i=1}^T \varphi_i^2} & \ge & \frac{(1/\gamma +
    \varphi)^2}{2 \varphi}
   \ge  \left( \frac{1}{2\varphi \gamma^2} + \frac{\varphi}{2}
   \right).
\end{eqnarray*}
This is minimized for $1/(2\varphi \gamma^2) = \varphi/2$ or
equivalently, $\varphi=1/\gamma$. Thus,
\[
  \Pr[q_T > q_0 + 1/\gamma] \le e^{-1/\gamma}
\]
Since there are at most ${f \choose 2}$ ways of selecting $I'
\subseteq I$, the probability that there exists an interval $I'$
with the properties above is at most
\[
  {f \choose 2} e^{-1/\gamma} \le f^2 e^{-1/\gamma} \le \frac{1}{\log^c n}
\]
for any constant $c$ if $\gamma = O(1/(\log T + \log \log n))$ is
small enough. \hfill \qed \end{proof}


\begin{lemma}\label{sector}
For any subframe $I$ in $F$ it holds that if there has been at least one round
during the past subframe where $p_S \le \rho_{green}$, then throughout $I$,
$p_S \le \rho_{red}$ w.m.p.
\end{lemma}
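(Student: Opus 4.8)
The plan is to combine Lemma~\ref{fast:recovery} and Lemma~\ref{lemma:helper} by a two-subframe argument, carefully bridging the fact that the guarantee of Lemma~\ref{lemma:helper} requires $p_S$ to be small \emph{at the beginning} of a subframe, whereas the hypothesis here only gives us a single good round somewhere in the \emph{past} subframe. Concretely, let $I$ be the current subframe and let $I_0$ be the preceding one, and suppose there is a round $t_0 \in I_0$ with $p_S \le \rho_{green}$ at $t_0$. First I would apply Lemma~\ref{lemma:helper} to the interval starting at $t_0$: since $p_S \le \rho_{green}$ at $t_0$, w.m.p.\ $p_S \le \rho_{yellow}$ for all rounds from $t_0$ until the end of a window of length $f$. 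Because $|I_0| = f$ and $t_0 \in I_0$, this window reaches at least to the end of $I_0$, and in particular $p_S \le \rho_{yellow}$ at the first round of $I$.

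Next I would invoke Lemma~\ref{fast:recovery} on the subframe $I_0$ itself (or rather, note it is implied by the hypothesis), and more importantly apply Lemma~\ref{lemma:helper} a second time, now with $I$ as the subframe: since $p_S \le \rho_{yellow}$ at the beginning of $I$, w.m.p.\ $p_S \le \rho_{red}$ throughout $I$. Taking a union bound over the two (w.m.p.) failure events — the bound from $t_0$ failing to hold through the end of $I_0$, and the bound from the start of $I$ failing to hold through $I$ — still leaves a w.m.p.\ guarantee, since a constant-sized union of events each failing with probability at most $1/(\log n)^{c+1}$ fails with probability at most $1/(\log n)^c$. This yields $p_S \le \rho_{red}$ throughout $I$ w.m.p., as claimed.

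The one subtlety I want to handle carefully is the alignment of the length-$f$ windows in Lemma~\ref{lemma:helper} with the subframe boundaries. Lemma~\ref{lemma:helper} is stated for \emph{any} time interval $I$ of size $f$, not just subframes aligned to the frame partition, so I am free to start the first application of it at the round $t_0$ rather than at a subframe boundary; the window $[t_0, t_0 + f)$ then covers the remainder of $I_0$ and I only need the endpoint value "at the start of $I$," which lies inside that window since $t_0 + f \ge$ (start of $I$) $+ 1$. That is the main thing to check; the rest is just chaining the two deterministic-conclusion-given-the-event statements and a union bound, so I expect no real obstacle beyond getting this window bookkeeping right.
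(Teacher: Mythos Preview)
Your proposal is correct and follows essentially the same two-step chaining of Lemma~\ref{lemma:helper} that the paper uses: first from the green round $t_0$ in the previous subframe to conclude $p_S \le \rho_{yellow}$ at the start of $I$, then from the start of $I$ to conclude $p_S \le \rho_{red}$ throughout $I$. The paper's proof is a terse three-line version of exactly this; your version simply makes explicit the window-alignment check (that Lemma~\ref{lemma:helper} applies to any length-$f$ interval, so one may start it at $t_0$) and the union bound, both of which the paper leaves implicit.
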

\begin{proof}
Suppose that there has been at least one round during the past
subframe where $p_S \le \rho_{green}$. Then we know from
Lemma~\ref{lemma:helper} that w.m.p. $p_S \le \rho_{yellow}$ at the
beginning of $I$. But if $p_S \le \rho_{yellow}$ at the beginning of
$I$, we also know from Lemma~\ref{lemma:helper} that w.m.p. $p_S \le
\rho_{red}$ throughout $I$, which proves the lemma. \hfill
\qed\end{proof}

Now, define a subframe $I$ to be {\em good} if $p_S \le \rho_{red}$ throughout
$I$, and otherwise $I$ is called {\em bad}. With the help of
Lemma~\ref{fast:recovery} and Lemma~\ref{sector} we can prove the following
lemma.

\begin{lemma}\label{lem:upper}
For any sector $S$, at most $\epsilon \beta/6$ of the subframes $I$ in $F$ are
bad w.h.p., where the constant $\beta>0$ can be made arbitrarily small
depending on the constant $\alpha$ in $f$.
\end{lemma}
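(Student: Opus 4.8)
The plan is to combine the two structural lemmas just proved, Lemma~\ref{fast:recovery} and Lemma~\ref{sector}, with a concentration argument over the subframes of $F$. First I would fix the sector $S$ and apply Lemma~\ref{fast:recovery} to each of the $(\alpha\log n)/\epsilon$ subframes of $F$; since this is only polynomially many subframes, a union bound shows that w.h.p.\ \emph{every} subframe of $F$ contains at least one round with $p_S\le\rho_{green}$. Call this event $\mathcal{G}$. Conditioned on $\mathcal{G}$, each subframe $I_j$ with $j\ge 2$ is preceded by a subframe containing a $\rho_{green}$-round, so Lemma~\ref{sector} says $I_j$ is good w.m.p., i.e.\ bad with probability at most $1/(\log n)^{c}$; and by taking $\gamma=O(1/(\log T+\log\log n))$ small enough, the constant $c$ here can be made as large as we like, since this is exactly the slack available in the proof of Lemma~\ref{lemma:helper}. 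The first subframe $I_1$ I would simply declare bad: it adds at most $1$ to the count and is negligible for large $n$ (alternatively one uses the last subframe of the preceding time frame, as Theorem~\ref{th:throughput} is proved for sequences of frames).

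The difficulty is that the subframes are \emph{not} independent: whether $p_S$ ever exceeds $\rho_{red}$ inside $I_j$ depends on the value of $p_S$ entering $I_j$, which in turn depends on the whole history. To get around this I would not condition globally on $\mathcal{G}$, but argue conditionally subframe by subframe, in the same spirit as the martingale/stochastic-domination argument in the proof of Lemma~\ref{lemma:helper}. For $j\ge 2$ let $A_j$ be the event that $I_{j-1}$ contains a round with $p_S\le\rho_{green}$ \emph{and} $I_j$ is bad, and let $\mathcal{H}_{j-1}$ be the history up to the end of $I_{j-1}$. Since $\mathcal{H}_{j-1}$ already determines whether $I_{j-1}$ has such a round, Lemma~\ref{sector} gives $\Pr[A_j\mid\mathcal{H}_{j-1}]\le 1/(\log n)^{c}$ in every case (the probability is $0$ when the conditioning round does not exist). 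Hence $\sum_{j\ge 2}\mathbf{1}[A_j]$ is stochastically dominated by a sum of $(\alpha\log n)/\epsilon$ independent Bernoulli variables with parameter $1/(\log n)^{c}$, and Lemma~\ref{lem_chernoff} (taking $(1+\delta)\mu$ equal to the target, so that the deviation dominates the tiny mean $\mu\le(\alpha\log n)/(\epsilon(\log n)^{c})$) bounds this sum by some $k=o(\log n)$ w.h.p.

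Finally I would assemble the count: on $\mathcal{G}$, every bad subframe other than $I_1$ satisfies $A_j$, so the number of bad subframes in $F$ is at most $1+\sum_{j\ge 2}\mathbf{1}[A_j]\le 1+k$ w.h.p. It then remains to check $1+k\le(\epsilon\beta/6)\cdot(\alpha\log n)/\epsilon=(\alpha\beta\log n)/6$, and this is precisely where the freedom in $\alpha$ is used: for a prescribed small $\beta$ one takes $\alpha$ (equivalently, the number of subframes in $F$) large enough, together with $c$ large enough via $\gamma$, that the w.h.p.\ bound $1+k$ falls below the target fraction $(\alpha\beta\log n)/6$. The one genuine obstacle is the inter-subframe dependency just discussed; once it is dispatched by the conditional domination argument, the rest is routine union bounds plus one application of Lemma~\ref{lem_chernoff}, with the constants ($c$ through $\gamma$, then $\alpha$) fixed in that order.
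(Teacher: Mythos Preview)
Your overall plan---combine Lemma~\ref{fast:recovery} with Lemma~\ref{sector} and then apply Lemma~\ref{lem_chernoff} over the $(\alpha\log n)/\epsilon$ subframes---is exactly the paper's plan. The gap is in your conditional domination step. You claim that Lemma~\ref{sector} yields $\Pr[A_j\mid\mathcal{H}_{j-1}]\le 1/(\log n)^c$ for \emph{every} history $\mathcal{H}_{j-1}$ in which $I_{j-1}$ contains a $\rho_{green}$-round, but Lemma~\ref{sector} does not say this. Its ``w.m.p.'' statement is over the randomness in \emph{both} $I_{j-1}$ (from the green round onward) and $I_j$: the proof first uses Lemma~\ref{lemma:helper} inside $I_{j-1}$ to push $p_S$ from $\rho_{green}$ down to $\le\rho_{yellow}$ at the start of $I_j$, and only then uses Lemma~\ref{lemma:helper} inside $I_j$. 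Once you condition on the full history $\mathcal{H}_{j-1}$, you have fixed $p_S$ at the start of $I_j$, and there are (rare but possible) histories in which $I_{j-1}$ has a green round yet $p_S>\rho_{yellow}$ at the end of $I_{j-1}$; for those histories your conditional bound fails, so the stochastic domination by i.i.d.\ Bernoullis is not established.

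The paper handles exactly this overlap by splitting into odd- and even-numbered subframes. For, say, odd subframes $I_{2k+1}$ one conditions on the history up to the \emph{start} of $I_{2k}$; then Lemma~\ref{fast:recovery} (for $I_{2k}$) plus Lemma~\ref{sector} (for $I_{2k+1}$) give $\Pr[I_{2k+1}\text{ bad}\mid\text{state at start of }I_{2k}]\le 1/(\log n)^c$ uniformly, because the randomness in the pair $(I_{2k},I_{2k+1})$ is still free. Since these pairs are disjoint across $k$, the product hypothesis of Lemma~\ref{lem_chernoff} holds, giving at most an $\epsilon\beta/7$-fraction of bad odd subframes w.h.p., and likewise for the even ones; summing yields the $\epsilon\beta/6$ bound. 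Your argument is easily repaired by this same odd/even device (equivalently, by conditioning on $\mathcal{H}_{j-2}$ rather than $\mathcal{H}_{j-1}$ and treating the two parities separately), but as written the filtration is one subframe too fine for Lemma~\ref{sector} to apply pointwise.
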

\begin{proof}
From Lemma~\ref{fast:recovery} it follows that for every subframe
$I$ in $F$ there is a time point $t \in I$ at which $p_S \le
\rho_{green}$ w.h.p. Consider now some fixed subframe $I$ in $F$
that is not the first one and suppose that the previous subframe in
$F$ had at least one round with $p_S \le \rho_{green}$. Then it
follows from Lemma~\ref{sector} that for all rounds in $I$, $p_S \le
\rho_{red}$ w.m.p. (where the probability only depends on $I$ and
its preceding subframe), i.e., $I$ is good. Hence, it follows from
the Chernoff bounds that at most $\epsilon \beta/7$ of the
odd-numbered as well as the even-numbered subframes after the first
subframe in $F$ are bad w.h.p.~(if the constant $\alpha$ is
sufficiently large). This implies that overall at most $\epsilon
\beta/6$ of the subframes in $F$ are bad w.h.p. \hfill \qed
\end{proof}

From Lemma~\ref{lem:upper} it follows that apart from an $\epsilon
\beta$-fraction of the subframes, all subframes $I$ in $F$ satisfy $\sum_{v
\in D(u)} p_v \in O(1)$ throughout $I$, which completes the proof of
Lemma~\ref{lem:contention}.

\cancel{
\subsection{Self-Stabilization}

Note that if the protocol is well-initialized, i.e., initially $T_v=1$ for all
$v$, then
the minimum time that is sufficient for
Theorem~\ref{th:main1} is $\Theta((T+(\log^3 n)/(\epsilon \gamma^2)) \cdot
(\log n)/\epsilon)$. Notice that $\hat{T} := \max_v T_v$ is bounded over time
as shown in the next lemma, so {\sc Jade} does not run into significant aging
problems.

\begin{lemma}
In each of the first polynomial in $n$ and $T$ many rounds, $\hat{T}
\in T+ n^{O(\gamma)}$ w.h.p.
\end{lemma}
\begin{proof}
From the proof of Lemma~\ref{lemma:helper} it follows that for every
node $u$, $\sum_{v \in D(u)} p_v = O(\gamma \ln n)$ for every round
(out of polynomial in $n$ and $T$ many rounds) w.h.p. This implies
that the probability that $u$ sees an idle channel is $e^{-O(\gamma
\ln n)} = 1/n^{O(\gamma)}$. Thus, it follows from the Chernoff
bounds and {\sc Jade} that within $T+n^{O(\gamma)}$ steps $u$ will
sense an idle channel at least once w.h.p. Since $T_u$ is only
increased if all prior $T_u$ time steps were blocked from the
perspective of $u$, $T_u$ is limited to $T+n^{O(\gamma)}$. \hfill
\qed\end{proof} }

\subsection{Limitations of the {\sc Jade} Protocol} \label{sec:lower}

One may ask whether a stronger throughput result than Theorem~\ref{th:main1}
can be shown. Ideally, we would like to use the following model. A MAC
protocol is called {\em strongly $c$-competitive} against some
$(T,1-\epsilon)$-bounded adversary if, for any sufficiently large time
interval and any node $v$, the number of rounds in which $v$ successfully
receives a message is at least a $c$-fraction of the total number of
non-jammed rounds at $v$. In other words, a strongly $c$-competitive MAC
protocol can achieve at least a $c$-fraction of the best possible throughput
for every individual node. Unfortunately, such a protocol seems to be
difficult to design. In fact, {\sc Jade} is not strongly $c$-competitive for
any constant $c>0$, even if the node density is sufficiently high.

\begin{theorem}
In general, {\sc Jade} is not strongly $c$-competitive for a constant $c>0$ if
the adversary is allowed to be 2-uniform and $\epsilon \le 1/3$.
\end{theorem}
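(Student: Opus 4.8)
The plan is to exhibit a single UDG, a 2‑uniform $(T,1-\epsilon)$‑bounded adversary, and a node $v$ such that, over arbitrarily long time intervals, $v$ successfully receives a message in only a vanishing fraction of its non‑jammed rounds, while every node still has arbitrarily many neighbours. Take $G$ to be a clique on $\{v\}\cup A$ with $|A|=m$ large, so that every node has degree at least $m$, and let the adversary partition $V$ into the two uniform classes $A$ and $\{v\}$. The adversary runs a periodic \emph{complementary} schedule of period $p$, where $p$ is a small constant fraction of $T$: during the first $B=p/2$ rounds of each period it jams every node of $A$ and leaves $v$ unjammed, and during the remaining $B$ rounds it jams $v$ and leaves $A$ unjammed. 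A short calculation shows this schedule is $(T,1-\epsilon)$‑bounded: in any window of size $w\ge T$ each node is jammed in at most a $B/p+2B/w\le 1/2+2B/T$ fraction of the rounds, and choosing $B$ a small enough fraction of $T$ makes this at most $1-\epsilon$; here $\epsilon\le 1/3$ provides exactly the slack needed for the boundary term $2B/T$. Note that $v$'s non‑jammed rounds are precisely the blocks in which all of $A$ is jammed.

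The heart of the argument is to show that $p_A:=\sum_{w\in A}p_w$ is tiny throughout most of each block in which $A$ is jammed. I would prove by induction over the periods that at the start of every such block one has $T_w=1$ for all $w\in A$. During the complementary block (where $A$ is unjammed and is perturbed only by the jammed node $v$, which still transmits with probability at most $\hat p$) the nodes of $A$ behave as a single‑hop clique without jamming, so by the analysis of {\sc Jade} in that regime (cf.\ the single‑hop results underlying Theorem~\ref{th:throughput} and \cite{singlehop08}) the clique reaches a balanced state with $p_A=\Theta(1)$ and every $w\in A$ receives a message at a constant rate; since each successful reception decreases $T_w$ by one, and the unjammed block has length $\Theta(T)$, which dominates the at most $1+\sqrt{2B}$ value $T_w$ could have reached during the preceding jammed block (Lemma~\ref{lem:claim2.5} with $r=0$), $T_w$ is driven down to $1$ well before the block ends. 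Now enter a jammed block with $T_w=1$: the nodes of $A$ see only blocked channels, so $p_w$ is non‑increasing and, by the threshold rule, is multiplied by $(1+\gamma)^{-1}$ after $1,3,6,\ldots$ rounds — exactly the collapse analysed in Case~1 of the proof of Lemma~\ref{single_node}, whose argument uses only that the channel is blocked throughout (whether by contention or by jamming). Hence after the first $\tau_0=\Theta((\log n/\gamma)^2)=\Theta((\log n\log T)^2)$ rounds of the block we have $p_w\le\delta/m$ for all $w\in A$, i.e.\ $p_A\le\delta$ for a prescribed small constant $\delta$, and $p_A$ only shrinks further for the rest of the block; taking $T$ large relative to $n$ makes $\tau_0=o(B)$.

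It remains to assemble these facts. In any round of a jammed block beyond its first $\tau_0$ rounds, the probability that $v$ receives successfully is at most the probability that exactly one node of $A$ transmits, which is at most $p_A\le\delta$; hence the expected number of successful receptions of $v$ in one such block of length $B$ is at most $\tau_0+\delta B\le 2\delta B$, and the same bound holds with high probability by Lemma~\ref{lem_chernoff}. Summing over the $N$ jammed blocks inside a time interval $I$ — which together make up exactly $v$'s $f_v(I)=NB$ non‑jammed rounds — node $v$ receives at most $2\delta\, f_v(I)$ messages w.h.p.\ over arbitrarily long $I$. Since for every target $c>0$ we may first fix the density (any $m$), then set $\delta=c/4$, and then pick $T$ large enough that $\tau_0\le\delta B$, {\sc Jade} is not strongly $c$‑competitive for any constant $c>0$, even at arbitrarily high node density. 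I expect the main obstacle to be the inductive collapse/recovery step: one must verify carefully that the thresholds $T_w$ genuinely return to $1$ during every unjammed block, since if $T_w$ stayed near its cap $2^{1/(4\gamma)}$ the collapse during a jammed block would require $\Omega(2^{1/(4\gamma)})$ rounds and could exceed $B$, destroying the construction; establishing the $(T,1-\epsilon)$‑boundedness and the concentration step are comparatively routine.
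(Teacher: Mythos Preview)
Your construction and the paper's share the same core idea---a 2-uniform adversary that schedules the non-jammed periods of a large clique $A$ (the paper calls it $U$) and of the singleton $v$ to be disjoint, so that by the time $v$ can listen, the sending probabilities in $A$ have collapsed. The principal difference is in how the collapse is arranged.

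The paper places, within each $T$-interval, $U$'s non-jammed window at the beginning and $v$'s at the end, leaving a gap of length $(1-2\epsilon)T\ge T/3$ in which \emph{both} sides are jammed. The collapse of the $p_u$ happens entirely inside this gap, so that when $v$'s non-jammed window opens, $\sum_{u\in U}p_u$ is already $O(1/n^2)$ and stays there throughout. Crucially, the paper does not bound the starting point of the collapse by an inductive recovery argument; it directly invokes the single-hop analysis of \cite{singlehop08}, which yields $T_u\le\alpha\sqrt{T\log n/\epsilon}$ w.h.p.\ uniformly in time. Since this is $o(T)$, the gap contains at least $\Theta((\log n)/\gamma)$ threshold triggers, driving every $p_u$ below $1/n^3$.

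Your alternating half-period schedule has no such gap: the collapse must take place during the first $\tau_0$ rounds of $v$'s own non-jammed block, and you absorb those rounds with the trivial bound $\tau_0$. This is valid, but it forces you into the inductive recovery step you correctly flag as the main obstacle---you must argue that in every $A$-unjammed block the $p_w$ climb back from exponentially small values and the $T_w$ are driven down to $O(1)$, all within $o(B)$ rounds, starting from a state that the cited single-hop results do not literally cover (they assume the standard initialisation, not $p_w\approx (1+\gamma)^{-\sqrt{2B}}$). The outline you sketch (idle-driven multiplicative increase for $\approx\sqrt{2B}$ rounds, then constant-rate receptions reducing $T_w$) is plausible but is genuinely more work than the paper does. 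The paper's gap-based construction sidesteps this entirely: by quoting the global $T_u$ bound from \cite{singlehop08} it never needs to track the recovery dynamics. If you prefer to keep your complementary schedule, you could likewise invoke \cite{singlehop08} on $A$---whose jamming is $(T,\tfrac12+o(1))$-bounded---to get $T_w=O(\sqrt{T\log n})$ uniformly and drop the induction.
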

\begin{proof}
Suppose that (at some corner of the UDG) we have a set $U$ of at least
$1/\hat{p}$ nodes located closely to each other that are all within the
transmission range of a node $v$. Initially, we assume that $\sum_{u \in U}
p_u \ge 1$, $p_v = \hat{p}$ and $T_x = 1$ for all nodes $x \in U \cup \{v\}$.
The time is partitioned into time intervals of size $T$. In each such time
interval, called {\em $T$-interval}, the $(T,1-\epsilon)$-bounded adversary
jams all but the first $\epsilon T$ rounds at $U$ and all but the last
$\epsilon T$ rounds at $v$. It follows directly from Section 2.3 of
\cite{singlehop08} that if $T=\Omega((\log^3 n)/(\gamma^2 \epsilon))$, then
for every node $u \in U$, $T_u \le \alpha \sqrt{T \log n / \epsilon}$ w.h.p.
for some sufficiently large constant $\alpha$. Thus, $T_u \le \gamma T/(\beta
\log n)$ w.h.p. for any constant $\beta>0$ if $T$ is sufficiently large.
Hence, between the last non-jammed round at $U$ and the first non-jammed round
at $v$ in a $T$-interval, the values $T_u$ are increased (and the values $p_u$
are decreased) at least $\beta(\log n)/(6\gamma)$ times. Thus, at the first
non-jammed round at $v$, it holds for every $u \in U$ that
\[
  p_u \le \hat{p} \cdot (1+\gamma)^{-\beta(\log n)/(6\gamma)}
  \le \hat{p} \cdot e^{-(\beta/6) \log n} \le 1/n^{\beta/6}
\]
and, therefore, $\sum_{u \in U} p_u = O(1/n^2)$ if $\beta \ge 18$. This
cumulative probability will stay that low during all of $v$'s non-jammed
rounds as during these rounds the nodes in $U$ are jammed. Hence, the
probability that $v$ receives any message during its non-jammed rounds of a
$T$-interval is $O(1/n^2)$, so {\sc Jade} is not $c$-competitive for $v$ for
any constant $c>0$. \hfill \qed\end{proof}

\begin{figure} [ht]
\begin{center}
\includegraphics[width=0.495\columnwidth]{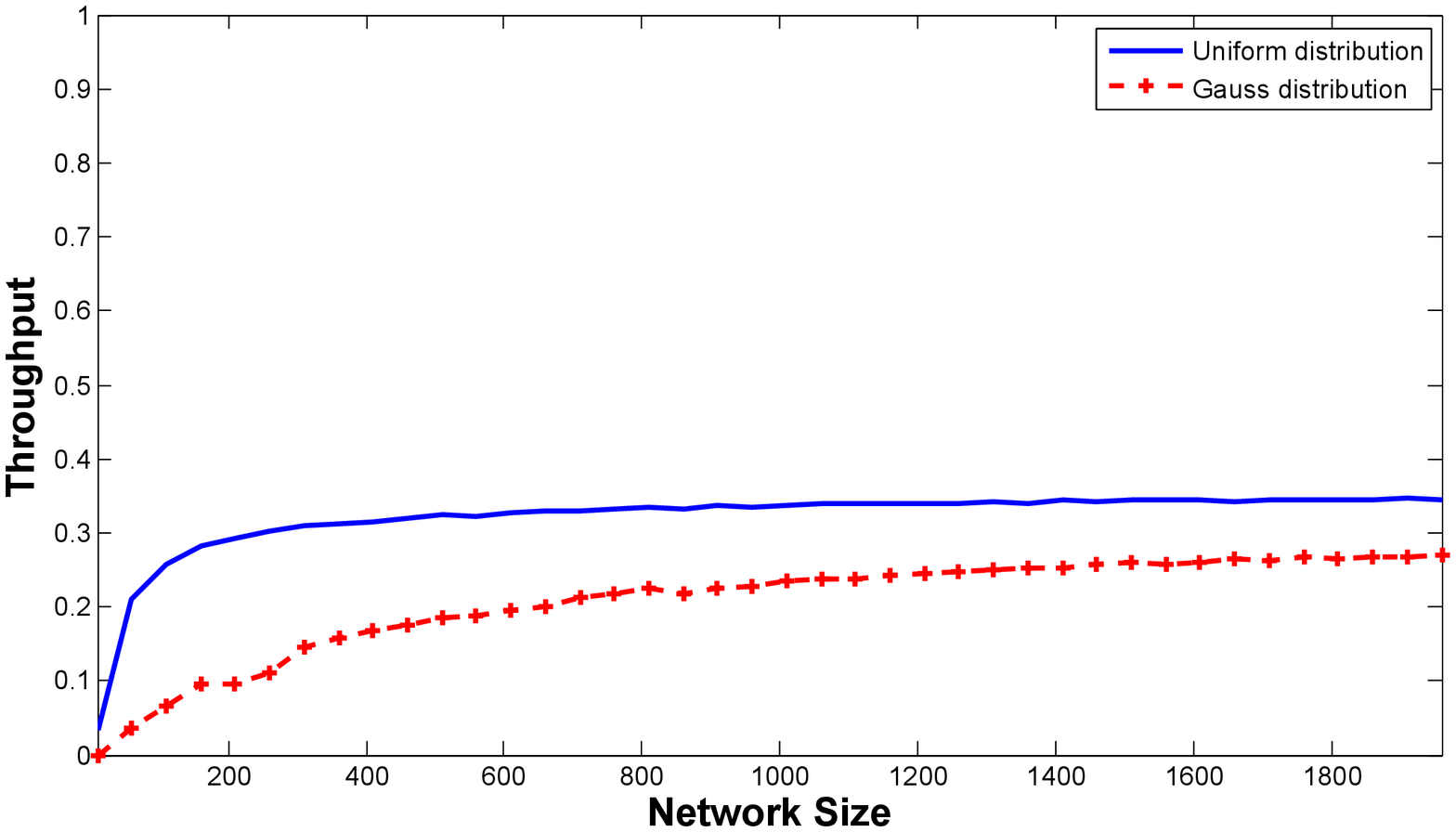}\\
\includegraphics[width=0.495\columnwidth]{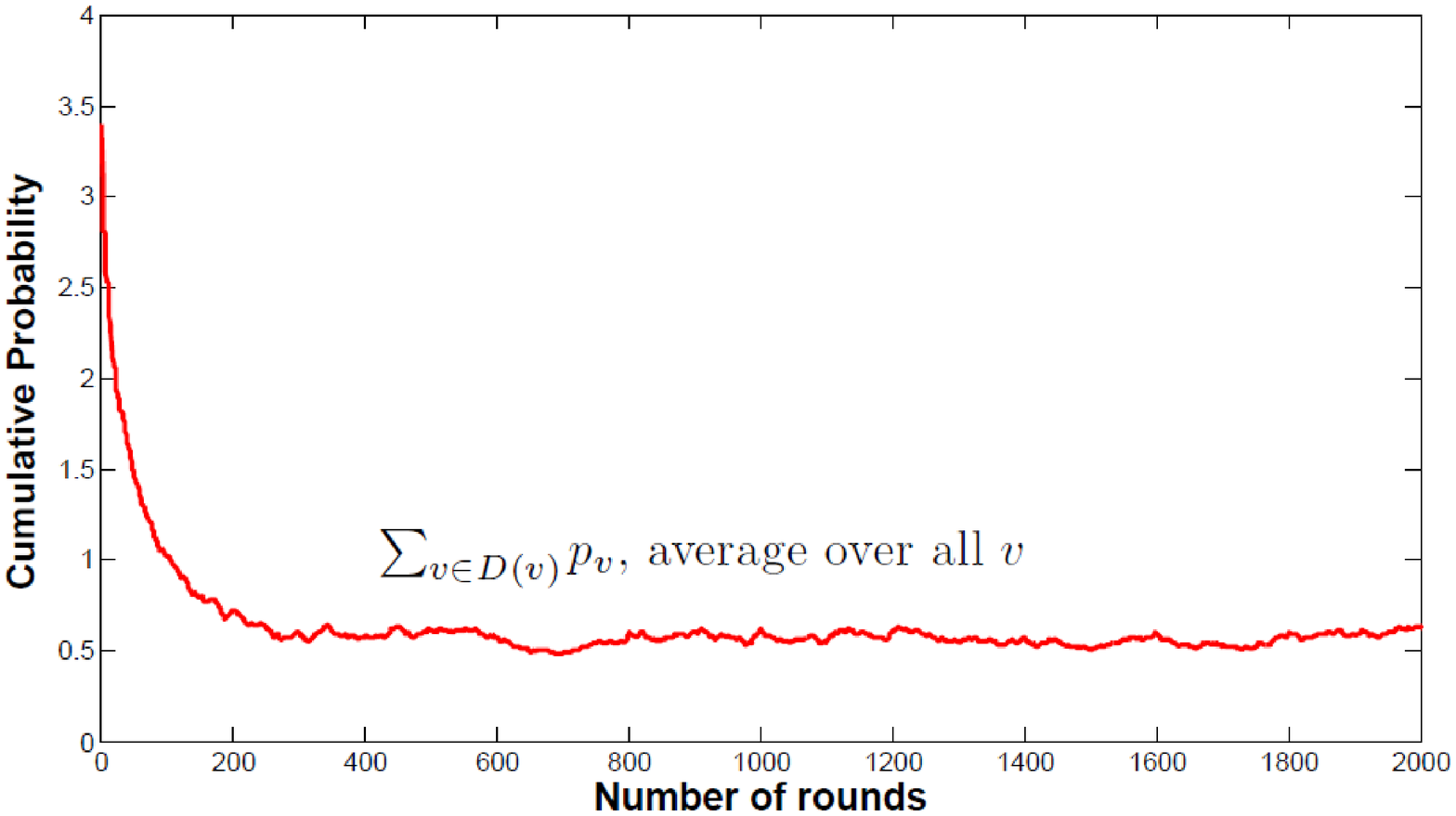}~\includegraphics[width=0.495\columnwidth]{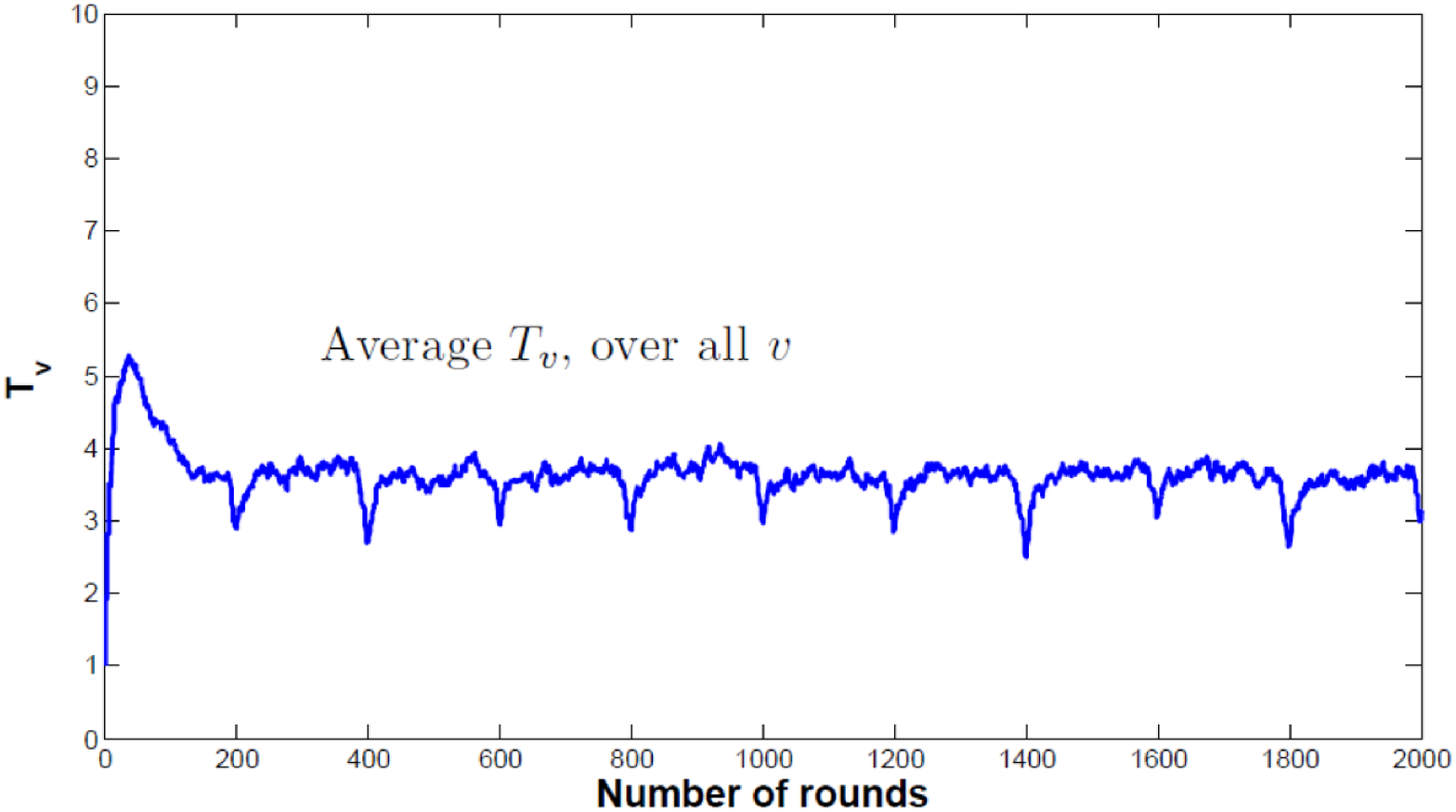}\\
\caption{\emph{Top:} Throughput as a function of network size.
\emph{Bottom left:} Convergence behavior for multi-hop networks
(uniform distribution). For the plot, we used $n=500$. Note that
the start-up phase where the sending probabilities are high is
short (no more than 50 rounds). \emph{Bottom right:} Convergence
of $T_v$ for multi-hop networks (uniform distribution). For the
plot, we used $n=500$.}\label{fig:succ}
\end{center}
\end{figure}

Also, in our original model, {\sc Jade} is not constant
competitive if the node density is too low.

\begin{theorem}
In general, {\sc Jade} is not $c$-competitive for a constant $c$ independent
of $\epsilon$ if there are nodes $u$ with $|D(u)|=o(1/\epsilon)$ and the
adversary is allowed to be 2-uniform.
\end{theorem}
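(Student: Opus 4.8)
Proof plan for the final theorem (Jade is not $c$-competitive if some node $u$ has $|D(u)| = o(1/\epsilon)$ and the adversary is 2-uniform).

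The plan is to exhibit an explicit UDG together with a 2-uniform $(T,1-\epsilon)$-bounded jamming schedule under which the aggregate throughput over the whole network is only an $o(1)$-fraction of the aggregate number of non-jammed rounds, where the $o(1)$ is in terms of $\epsilon$. The natural construction is a ``barbell'': take a node $u$ whose unit disk contains only $k = o(1/\epsilon)$ nodes (including $u$ itself), and attach to $u$ — but outside the unit disks of the $k-1$ small-neighborhood nodes — a dense cluster $C$ of $\Theta(n)$ nodes mutually within range. The two uniformity classes are (i) the cluster $C$ and (ii) everyone else (in particular $u$ and its sparse neighborhood). The idea is to make almost all of the network's non-jammed budget sit on the handful of sparse nodes near $u$, while simultaneously driving the sending probabilities in $C$ — which dominate the receiving opportunities of the sparse nodes — down to a negligible level during exactly the windows when those sparse nodes are non-jammed.

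Concretely, I would partition time into $T$-intervals and have the adversary jam the cluster $C$ for all but the first $\epsilon T$ rounds of each interval, and jam the ``everyone else'' class for all but the last $\epsilon T$ rounds. As in the preceding proof, I invoke the bound from Section 2.3 of \cite{singlehop08} (and Lemma~\ref{lem:claim2.5}) to argue that for $T = \Omega((\log^3 n)/(\gamma^2\epsilon))$ the thresholds $T_w$ for $w \in C$ stay $O(\sqrt{T\log n/\epsilon}) = o(\gamma T/\log n)$ w.h.p., so between the last non-jammed round of $C$ and the first non-jammed round of the sparse class there are $\Omega((\log n)/\gamma)$ forced threshold-increase/probability-decrease steps at each $w\in C$; hence $\sum_{w\in C}p_w = O(1/n^2)$ throughout every non-jammed round of $u$ and its sparse neighbors. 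During those rounds the only potential senders in the unit disk of any sparse node $v$ near $u$ are the $k-1$ other sparse nodes plus $u$; each has $p \le \hat p$, so $v$ receives a message in a given round with probability $O(k\hat p)$. Summed over all sparse nodes and all their non-jammed rounds that is $O(k\hat p)\cdot(\text{non-jammed budget of the sparse class})$. Since $k = o(1/\epsilon)$ — say $k \le 1/(g(\epsilon)\,\epsilon)$ for some $g(\epsilon)\to\infty$ — and the non-jammed budget of the whole network is within a constant factor of that of the sparse class (the cluster $C$ contributes at most $\epsilon T$ non-jammed rounds per node per interval, but crucially its successful receptions are also negligible because $u$ never transmits to it and the nodes in $C$ collide among themselves when they all wake up with $\sum_{w\in C}p_w$ near $1$)... the total successful receptions are an $o(1)$-fraction of the total non-jammed rounds. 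Wait — I need the cluster's own throughput to also be non-dominant; I handle that by additionally having the adversary keep $\sum_{w\in C}p_w$ bounded below by $1$ at the start of each interval (initial condition, as in the previous proof), so that in the $\epsilon T$ rounds $C$ is awake, a constant fraction of rounds see $\ge 2$ transmissions in $C$, giving only $O(\epsilon T)$ successful receptions per interval for the entire cluster, i.e. a constant-competitive throughput \emph{for $C$ alone} but one that is dwarfed by the sparse class's unmet $f_v$ once we also note the sparse class carries a $\ge (2/\epsilon)/(2/\epsilon + \text{const})$... hmm, that ratio isn't automatically small. Let me restructure: I make the sparse side \emph{numerous} — many disjoint copies of the $u$-gadget, each with its $k$ sparse nodes, all sharing one big cluster $C$ of size $\Theta(n)$ placed so that $C$ lies in $D(u_j)$ for every gadget $j$ but $C \cap D(v) = \emptyset$ for the non-$u$ sparse nodes $v$. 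Then $\sum f_v$ over sparse nodes is $\Theta((n/k)\cdot k\cdot\epsilon T) = \Theta(n\epsilon T)$ per interval while $\sum f_w$ over $C$ is $\Theta(n\epsilon T)$ too, comparable; but $\sum s_w$ over $C$ is only $\Theta(\epsilon T)$ (one cluster, constant per round) and $\sum s_v$ over sparse nodes is $O(k\hat p\cdot n\epsilon T/k) = O(\hat p\, n\epsilon T)$ — still a constant fraction, not $o(1)$. The resolution is that $\hat p$ is fixed but I get to choose it \emph{after} $\epsilon$? No — the theorem says ``$c$ independent of $\epsilon$'', so I only need the competitive ratio to degrade as $\epsilon\to 0$, i.e. I want $\sum s / \sum f = O(k\epsilon) = o(1)$; and indeed $\sum s_v = O(k\hat p)\cdot(\text{per-sparse-node }f) $ but the per-sparse-node non-jammed budget in a full window of size $T$ is $\epsilon T$, whereas its \emph{receiving opportunity} requires exactly one of its $\le k$ neighbors to fire, which during its $\epsilon T$ non-jammed rounds happens $O(k\hat p \cdot \epsilon T)$ times — so $\sum s_v \le O(k\hat p\epsilon T)\cdot(\#\text{sparse nodes})$ while $\sum_{\text{all}} f = \Omega(\epsilon T\cdot n)$, giving ratio $O(k\hat p) = O(1)$ again. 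The genuinely missing factor is that for a sparse node to receive, not only must exactly one neighbor transmit, but that is already counted — so the honest bound is $\sum s_v / \sum f_v = O(k\hat p)$, which is \emph{not} $o(1)$ unless $k\hat p \to 0$.

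The fix — and the real content of the theorem — must therefore be that the adversary additionally suppresses $p_v$ at the sparse nodes themselves: arrange that $u$'s few sparse neighbors \emph{also} have their probabilities driven down, so they rarely transmit, hence rarely deliver to each other. This is achieved because each sparse node $v\ne u$ has $u$ as a neighbor and $u$ sits in the dense cluster's disk: when the sparse class is non-jammed, $u$ hears an idle-or-successful channel essentially never (the cluster $C$ is jammed, so from $u$'s own disk only the $\le k$ sparse nodes matter, and they are few and quiet), so $u$'s threshold $T_u$ grows and $p_u$ shrinks; symmetrically every sparse $v$ sees only $\le k$ quiet neighbors and no idle channel is useful enough... actually an \emph{idle} channel would \emph{increase} $p_v$. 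So I instead exploit the schedule's $\epsilon\le 1/3$ slack: give the sparse class three sub-phases per $T$-interval so that each sparse node is, for a $1-\Theta(\epsilon)$ fraction of its non-jammed rounds, hearing exactly one fixed ``designated transmitter'' among the $k$ — forcing repeated $(1+\gamma)^{-1}$ decreases — so that $p_v$ collapses to $n^{-\Omega(1)}$ before the rounds we actually count. Then $\sum s_v = O(n^{-\Omega(1)})$ and the only throughput left is $C$'s $\Theta(\epsilon T)$ per interval, versus total budget $\Omega(n\epsilon T)$, ratio $O(1/n) = o(1)$, even $o(1/\epsilon)$. I expect the main obstacle to be exactly this: designing the 2-uniform schedule on the sparse side so that (a) it is genuinely $(T,1-\epsilon)$-bounded, (b) it creates enough forced successful-reception or forced non-useful rounds at the sparse nodes to crush their $p_v$ \emph{before} the handful of rounds where we measure throughput, and (c) it does not accidentally also crush the cluster in a way that restores competitiveness — all while keeping only two uniformity classes. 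Once the schedule is fixed, the probability estimates are routine applications of Lemma~\ref{lem:claim2.5}, Lemma~\ref{cl_relation}, the Chernoff bound of Lemma~\ref{lem_chernoff}, and the threshold bound from \cite{singlehop08}, exactly mirroring the preceding theorem's proof.
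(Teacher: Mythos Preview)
Your proposal contains a genuine gap: you miss the one-line simplification that makes the argument work, and the elaborate constructions you propose instead either fail to achieve ratio $o(1)$ or violate 2-uniformity.

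The paper's construction has \emph{no dense cluster at all}. It takes a network consisting solely of one node $v$ together with $k=o(1/\epsilon)$ neighbors $U$, all mutually in range. The two uniformity classes are $\{v\}$ (never jammed) and $U$ (jammed for all but the first $\epsilon T$ rounds of each $T$-interval). The point is that $v$ alone contributes $T$ non-jammed rounds per interval to $\sum_w f_w$, so the denominator is at least $T$. During the long jammed phase of $U$, the $T_u$ thresholds grow and the $p_u$ collapse exactly as in the preceding theorem, so $v$ receives only $\epsilon T + O(T/\log n)$ messages; the nodes of $U$ receive at most $\epsilon T\cdot|U|=o(T)$ in total. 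Hence $\sum_w s_w \le (\epsilon+o(1))T$ against $\sum_w f_w \ge T$, giving ratio $\le \epsilon+o(1)$, which cannot be bounded below by any constant independent of $\epsilon$.

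Your barbell construction goes wrong precisely because both classes are jammed for a $(1-\epsilon)$-fraction of the time: then $\sum_w f_w$ scales like $\epsilon T\cdot n$, and the dense cluster $C$, being a single-hop clique, will (by the single-hop result) have a constant fraction of its $\epsilon T$ non-jammed rounds carry a successful transmission heard by all $\Theta(n)$ cluster nodes, so $\sum_{w\in C} s_w = \Theta(n\epsilon T)$ and the protocol looks constant-competitive. Your later fix with a ``designated transmitter'' among the sparse nodes cannot be implemented with only two uniformity classes, since all sparse nodes are jammed or non-jammed together. The lesson is that for an \emph{aggregate} (as opposed to per-node) lower bound, you want to inflate $\sum_w f_w$ without inflating $\sum_w s_w$; leaving one isolated node permanently unjammed does exactly that, whereas adding a dense cluster does the opposite.
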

\begin{proof}
Suppose that we have a set $U$ of $k=o(1/\epsilon)$ nodes located closely to
each other that are all within the transmission range of a node $v$. Let
$T=\Omega((\log^3 n)/(\gamma^2 \epsilon))$. In each $T$-interval, the
adversary never jams $v$ but jams all but the first $\epsilon T$ rounds at
$U$. Then Section 2.3 of \cite{singlehop08} implies that for every node $u \in
U$, $T_u \le \gamma T/(\beta \log n)$ w.h.p. for any constant $\beta>0$ if $T$
is sufficiently large. The nodes in $U$ continuously increase their
$T_u$-values and thereby reduce their $p_u$ values during their jammed time
steps. Hence, the nodes in $U \cup \{v\}$ will receive at most $\epsilon T
\cdot |U| + (\epsilon T + O(T/\log n)) = \epsilon T \cdot o(1/\epsilon) +
(\epsilon + o(1))T = (\epsilon +o(1))T$ messages in each $T$-interval on
expectation whereas the sum of non-jammed rounds over all nodes is more than
$T$. \hfill \qed\end{proof}

This implies Theorem~\ref{th:main2}. Hence, Theorem~\ref{th:main1} is
essentially the best one can show for {\sc Jade} (within our notation).

\subsection{Simulations} \label{sec:simulation}

In order to complement our theoretical insights, we conducted some
experiments. First, we present our throughput results for a
sufficiently large time interval, and then we discuss the
convergence behavior. For our simulations, as in our formal
analysis, we assume that initially all nodes $v\in V$ have a high
sending probability of $p_v = \hat{p}=1/24$. The nodes are
distributed at random over a square plane of $4\times 4$ units, and
are connected in a unit disk graph manner (multi-hop). We simulate
the jamming activity in the following way: for each round, a node is
jammed independently with probability $(1-\epsilon)$.
We run the simulation for a sufficiently large number of time steps
indicated by the Theorem~\ref{th:main1}, i.e., for $([T+(\log^3
n)/(\gamma^2 \epsilon)] \cdot (\log n)/\epsilon$
rounds, where $\epsilon=0.3$, $T=200$, and $\gamma =0.1$.

Figure~\ref{fig:succ}~(\emph{top}) shows the throughput
competitiveness of {\sc Jade} for a scenario where different numbers
of nodes are distributed uniformly at random over the plane and a
scenario where the nodes are distributed according to a
normal/Gaussian distribution $\mathcal{N}(0,1)$. In both cases, the
throughput is larger when the density is higher. This corresponds to
our formal insight that a constant competitive throughput is
possible only if the node density exceeds a certain threshold. For
example, this holds in case there are $60$ nodes in the $4\times 4$
plane (density of $3.75$), as there are at least $3.75 \pi \approx
12 > 2/\epsilon \approx 7$ uniformly distributed nodes in one unit
disk. As can be seen in the figure, when the number of nodes is
larger than $60$, the throughput falls in a range between $20\%$ and
$35\%$.


Convergence time is the second most important evaluation criterion.
We found that already after a short time, a constant throughput is
achieved; in particular, the total sending probability per unit disk
approaches a constant value quickly. This is due to the nodes'
ability to adapt their sending probabilities fast, see
Figure~\ref{fig:succ}~(\emph{bottom left}). The figure also
illustrates the high correlation between success ratio and
aggregated sending probability.


Finally, we have also studied the average of the $T_v$ values over
time. While initially, due to the high sending probabilities, the
$T_v$ intervals are large (up to around $5$ if $n=500$), they
decline quickly, similarly to our observations made in the previous
plots. The average of $T_v$ values stabilize in an interval $[2,4]$,
as shown in Figure~\ref{fig:succ}~(\emph{bottom right}).

\section{Conclusion}\label{sec:conclusion}

This paper has presented the first jamming-resistant MAC protocol
with provably good performance in multi-hop networks exposed to an
adaptive adversary. While we have focused on unit disk graphs, we
believe that our stochastic analysis is also useful for more
realistic wireless network models. Moreover, although our analysis
is involved, our protocol is rather simple.



{\footnotesize
  \bibliographystyle{abbrv} \bibliography{jammers}
}

\end{document}